\documentclass[12pt]{l4dc2021}


\usepackage[utf8]{inputenc} 
\usepackage[T1]{fontenc}    
\usepackage{booktabs}       
\usepackage{amsfonts}       
\usepackage{nicefrac}       
\usepackage{microtype}      
\usepackage{wrapfig}
\usepackage{algorithm} 
\usepackage{appendix}
\usepackage{xcolor}
\usepackage{import}

\usepackage[utf8]{inputenc}
\usepackage[english]{babel}
\usepackage{kky}

\usepackage{cleveref}
\usepackage{bm}
\usepackage{subcaption}

\captionsetup[figure]{font=small}

\renewcommand{\d}[1]{\ensuremath{\operatorname{d}\!{#1}}}
\crefname{theorem}{theorem}{theorem}
\crefname{corollary}{corollary}{corollary}
\crefname{assumption}{assumption}{assumption}
\crefname{lemma}{lemma}{lemma}
\crefname{remark}{remark}{remark}
\crefname{proposition}{proposition}{proposition}
\crefname{conjecture}{conjecture}{conjecture}
\crefname{definition}{definition}{definition}
\newtheorem*{theorem*}{Theorem}
\makeatletter
\newtheorem*{rep@theorem}{\rep@title}
\newcommand{\newreptheorem}[2]{%
\newenvironment{rep#1}[1]{%
 \def\rep@title{#2 \ref{##1}}%
 \begin{rep@theorem}}%
 {\end{rep@theorem}}}
\makeatother
\newreptheorem{theorem}{Theorem}

\DeclareFontFamily{U}{wncy}{}
\DeclareFontShape{U}{wncy}{m}{n}{<->wncyr10}{}
\DeclareSymbolFont{mcy}{U}{wncy}{m}{n}
\DeclareMathSymbol{\Sh}{\mathord}{mcy}{"58} 

\newcommand{\cmax}{c_\mathrm{max}}

\newcommand{\ar}[1]{{\color{blue} Arash: #1}}
\newcommand{\mgh}[1]{{\color{orange} MGH: #1}}

\renewcommand{\ar}[1]{{\color{blue} {}}}
\renewcommand{\mgh}[1]{{\color{cyan} {}}}

\newcommand{\sat}{{\rm SAT}}

\DeclareMathSymbol{:}{\mathrel}{operators}{"3A}

\title[Neural Lyapunov Redesign]{Neural Lyapunov Redesign}
\usepackage{times}

\author{%
 \Name{Arash Mehrjou} \Email{amehrjou@ethz.ch}\\
 \addr Max Planck Institute for Intelligent Systems \& ETH Z\"urich
 \AND
 \Name{Mohammad Ghavamzadeh} \Email{ghavamza@google.com}\\
 \addr Google Research
 \AND
 \Name{Bernhard Sch\"olkopf} \Email{bs@tue.mpg.de}\\
 \addr Max Planck Institute for Intelligent Systems
 }

\begin{document}

\maketitle

\begin{abstract}%
Learning controllers merely based on a performance metric has been proven effective in many physical and non-physical tasks in both control theory and reinforcement learning. However, in practice, the controller must guarantee some notion of safety to ensure that it does not harm either the agent or the environment. Stability is a crucial notion of safety, whose violation can certainly cause unsafe behaviors. Lyapunov functions are effective tools to assess stability in nonlinear dynamical systems. In this paper, we combine an improving Lyapunov function with automatic controller synthesis in an iterative fashion to obtain control policies with large safe regions. We propose a two-player collaborative algorithm that alternates between estimating a Lyapunov function and deriving a controller that gradually enlarges the stability region of the closed-loop system. We provide theoretical results on the class of systems that can be treated with the proposed algorithm and empirically evaluate the effectiveness of our method using an exemplary dynamical system.
\end{abstract}

\begin{keywords}%
 Lyapunov function, Controller Synthesis, Actor-Critic, Neural Networks%
\end{keywords}

\section{Introduction}

\label{sec:intro}

Studying the stability region of autonomous systems and designing controllers (policies) to drive a non-autonomous system towards a target behavior are of fundamental importance in many disciplines, such as aviations~\citep{liao2004analysis}, autonomous driving~\citep{wen2018new}, and robotics~\citep{pierson2017deep}. An indisputable goal for a controller is to stabilize the system. Unlike the global nature of linear systems, stability is a local property in nonlinear systems. Knowledge of the stability region is essential in many applications, such as stability of power systems~\citep{xin2007methods}, design of associative memory in artificial neural networks~\citep{hopfield1994neurons}, robotics~\citep{westervelt2003switching}, and biology~\citep{baer2006multiple}.

Controllers that enhance the stability region of a system, also known as Region of Attraction (RoA), are highly desired as they make more clever use of the inherent nonlinear structure of the system. For example, an autonomous driving system will remain safe under more diverse and potentially harsh conditions~\citep{imani2018region}.

A great number of methods for designing a controller for nonlinear systems and determining their RoA~\citep{isidori2014nonlinear,khalil2002nonlinear} have been proposed in the literature mainly based on Lyapunov's theory of stability~\citep{liapounoff1907probleme}. However, the design of a stabilizing controller has been ad-hoc for every class of nonlinear systems. Our work is inspired by a classic method called~\emph{Lyapunov Redesign} in control theory that uses a given Lyapunov function to design a controller such that the closed-loop system becomes stable when assessed by that Lyapunov function. Here, we relax the necessity to know the Lyapunov function by learning it together with the controller.

Machine learning and control theory have contributed to each other in various ways. Function approximators such as kernel methods~\citep{pillonetto2014kernel} and neural networks~\citep{chu1990neural} have been successfully used for system identification. Moreover, ideas from information theory and statistics have been adopted for model selection in system identification~\citep{solowjow2018minimum,mehrjou2018local}. Moreover, recent advances in implicit generative models have proven effective in filtering applications where the states of the system are inferred from noisy observations~\citep{mehrjou2018deep}. The development of automatic differentiation packages for machine learning has opened up a new approach to stability analysis and estimation of functions of interest in control theory such as Hamiltonian~\cite{greydanus2019hamiltonian} or Lyapunov function~\cite{mehrjou2019deep}.

The analytical tools from control theory have helped to analyze the dynamics of machine learning algorithms. For example, the algorithms such as generative adversarial networks~\citep{goodfellow2014generative} and $n$-player cooperative games have been analyzed as dynamical systems to obtain some insight into the dynamics of learning~\citep{mehrjou2018nonstationary,mehrjou2018analysis,mescheder2017numerics,mehrjou2019kernel} and the emergence of the playing strategies~\citep{balduzzi2018mechanics}. In the present work, analytical tools from control theory are employed to study the strength of the learning signal during the course of training.

The popularity of multi-player learning algorithms in machine learning after Generative Adversarial Networks~\citep{goodfellow2014generative} have been inspired in areas of control theory where such multi-player setting can be defined.~\cite{mehrjou2020learning} have proposed a two-player setting where one player learns the Lyapunov function and the region of attraction while the other player learns the governing equations of the system. These players inform each other and their collaboration gives a faster convergence compared with learning each of them separately. The present work can be seen as a two-player algorithm where one player learns the Lyapunov function and the stability region while the other player improves the controller. The information exchange between the two players is the key factor of the proposed algorithm.

As the Lyapunov function is one of the most important objects in control theory, machine learning tools have been employed to learn it for nonlinear systems~\citep{richards2018lyapunov}. A given Lyapunov function is used in~\citep{berkenkamp2016safe, berkenkamp2017safe} to derive a controller with a safety guarantee.~\cite{chang2019neural} solve an optimization problem at every stage to find the states that violate the Lyapunov condition. Our work proposes improvements in various directions. We improve the Lyapunov learning algorithm of~\cite{richards2018lyapunov} by a theoretically motivated method and show that~\emph{hyperbolicity} is a needed feature for nonlinear systems whose controllers are learned iteratively. Moreover,~\cite{berkenkamp2016safe, berkenkamp2017safe} assume the Lyapunov function is given while our work co-learns the Lyapunov function and the controller.~\cite{chang2019neural}'s algorithm needs to be solved until the end that can be too costly as a global optimization problem has to be solved multiple times. However, our work is a growing method that improves the controller while the system is in action.

We bring together various tools from control theory and machine learning to build an iterative algorithm that redesigns both controller and the Lyapunov function to enlarge the stability region. Our contributions are as follows: {\bf 1)} Improving the learning of the Lyapunov function, {\bf 2)} Interlacing the Lyapunov learning with the policy update to enlarge the RoA iteratively, and {\bf 3)} Providing theoretical results for the tractable class of systems and analyzing the learning signal. \Cref{sec:preliminaries} goes over the preliminary materials and \Cref{sec:problem_statement} describe the problem. The proposed algorithm and its theoretical discussions are presented in~\Cref{sec:method}. Finally, empirical evaluation comes in~\Cref{sec:experiments}, followed by related work and conclusion in~\Cref{sec:related_works}. To stay within the space limit, some proofs, theoretical discussions, and extended experimental results are provided in the Appendix.

\section{Preliminaries}

\label{sec:preliminaries}
Here we present the definitions, notations, and theoretical results that are used in later sections.

\noindent{\bf\em System:} A discrete-time\footnote{Including discretized continuous-time systems} time-invariant disturbance-free nonlinear dynamical system is described by
\begin{equation}
    \label{eq:discrete_system}
    \xb_{k+1} = f(\xb_k, \ub_k),
\end{equation}
where $k\in\ZZ$ is the discrete time index, $\xb_k\in\Xcal\subseteq \RR^d$ and $\ub_k\in \Ucal \subseteq \RR^p$ are the state and control signals. We consider a fully observable regime, where the states are available to a feedback controller, i.e., $\ub_k=\pi(\xb_k)$, and $\pi$ is the feedback law or policy. Hence,~\labelcref{eq:discrete_system} can be written as a time-invariant autonomous (TIA) system $\xb_{k+1}=f_\pi(\xb_k)$ where $f_\pi$ is the time-independent \emph{dynamics} function. By assuming Lipschitz continuity for $f$ and $\pi$, a unique solution to this system for every initial state exists that is captured by the so-called ~\emph{flow} function $\Phi(\xb, \cdot):\ZZ\to\Xcal$, with $\Phi(\xb, 0)=\xb$.

\noindent
{\bf\em Sets:} For a TIA system with dynamics function $f$, a state vector $\bar{\xb}$ is called an~\emph{equilibrium point} if it is a fixed-point for $f$, i.e.,~$\bar{x}=f(\bar{x})$. A state vector is called a~\emph{regular point} if it is not an~\emph{equilibrium point}. Let $J_f(\xb)$ be the Jacobian of $f$ at $\xb$. If $J_f(\xb)$ has no eigenvalue with modulus one, $\xb$ is called a~\emph{hyperbolic} equilibrium point. A hyperbolic equilibrium point is asymptotically stable when the eigenvalues of its corresponding Jacobian have modulus less than one; otherwise, it is an~\emph{unstable} equilibrium point. A system whose all equilibrium points are hyperbolic is called a hyperbolic system. A set $M\subseteq\Xcal$ is called an~\emph{invariant} set, if $f(M)=M$, i.e., every trajectory starting in $M$ remains in $M$, for $k\in\ZZ$. A point $\pb\in\RR^d$ is said to be in the $\omega$-limit set (or $\alpha$-limit set) of $\xb$, if for every $\epsilon>0$ and $N>0$ ($N<0$), there exists a $k>N$ ($k<N$) such that $\lVert \xb_k - \pb\rVert < \epsilon$. The stable and unstable manifolds of $\bar{\xb}$ are defined as the set of points whose $\omega$-limit ($\alpha$-limit) set is $\bar{\xb}$ and denoted by $W^s$($W^u$).

Both $W^s$ and $W^u$ are proved to be invariant sets~\citep{palis2012geometric}.

\noindent
{\bf\em Stability:} A fixed-point $\bar{\xb}$ is said to be an asymptotically stable equilibrium, if $\lim_{k\to\infty}\xb_k=\bar{\xb}$. Nonlinear systems often have a local stability region (RoA) that is defined for the stable equilibrium $\bar{\xb}$ as $\Rcal^{\bar{\xb}}= \{\xb\in\Xcal: \lim_{k\to\infty} \Phi(\xb, k)=\bar{\xb}\}$. Topologically speaking, when $f$ is continuous, $\Rcal^{\bar{\xb}}$ is an open, invariant set (see~\citealp{chiang2015stability} for exact definitions). The stability boundary $\partial \Rcal^{\bar{\xb}}$ is a closed positively invariant set and is of dimension $n-1$, if $\Rcal^{\bar{\xb}}$ is not dense in $\RR^n$.

\paragraph{Lyapunov stability.} Let $f$ be a locally Lipschitz continuous dynamics function with an equilibrium point at the origin $\bar{\xb}=\mathbf{0}$. Suppose there exists a locally Lipschitz continuous function $V=\Xcal\to\RR$ and a domain $\Dcal\subseteq\Xcal$, such that
\begin{align}
    \label{eq:lyapunov_conditions}
    &V(\mathbf{0})=0\quad\text{and}\quad V(\xb) > 0\quad&\forall \xb\in\Dcal\backslash \{\mathbf{0}\}\\
    &\Delta V(\xb):=V(f(\xb)) - V(\xb)<0\quad&\forall\xb\in\Dcal\backslash \{\mathbf{0}\}\label{eq:lyapunov_decrease_condition}
\end{align}
Then, $\bar{\xb}$ is asymptotically stable and $V$ is a~\emph{Lyapunov function (lf)}. The domain $\Dcal$ in which~\eqref{eq:lyapunov_decrease_condition} is satisfied is called the~\emph{Lyapunov decrease region}. Every sublevel set $\Scal_c(V) = \{\xb\in\Xcal: V(\xb)<c\}$, for $c\in\RR_+$, that is contained within $\Dcal$ is invariant under the dynamics $f$.

\section{Problem Statement}
\label{sec:problem_statement}

We consider a discrete-time TIA system as in~\labelcref{eq:discrete_system}, where the control signal is produced by a feedback-controller $\pi(\cdot;\psi):\Xcal \to \Ucal$ parameterized by $\psi$. Therefore, the closed-loop dynamics denoted by $f_\pi$ is a functional of the controller and is consequently parameterized by $\psi$ as $\xb_{k+1}=f(\xb_k,\pi(\xb_k;\psi))=f_\pi(\xb_k;\psi)$. Without loss of generality, we assume that the equilibrium point of interest is located at the origin $\bar{\xb}=\mathbf{0}$. The policy $\pi$ induces a RoA around the equilibrium point denoted by $\Rcal_\pi$.\footnote{We drop $\bar{\xb}$ from the superscript of $\Rcal_\pi^{\bar{\xb}}$, since we always assume $\bar{\xb}=\mathbf{0}$, unless otherwise stated.}

Each control task can be broken down into two subtasks: {\bf 1)} Controller synthesis and {\bf 2)} Closed-loop response evaluation. The controller is designed to optimize some measure of performance. In this work, the performance measure is the size of the stability region. 

We endow the state space $\Xcal$ with a measure $\mu$ to obtain the measure space $(\Xcal, \Bcal(\Xcal), \mu)$ with Borel sigma-algebra $\Bcal(\Xcal)$. To prevent pathological cases, we assume $\Xcal$ to be compact with $\mu(\Xcal)<\mu_\infty<\infty$. Let $\Pi=\{\pi:\Xcal\to\Ucal: \pi\in C^1\;\text{and bounded}\}$ be the set of all functions from which the policy is chosen. The goal is to find a member $\pi^*$ of the equivalence class of optimal policies $\Pi^*\subseteq\Pi$, where $\Pi^*$ is defined as $\Pi^*=\{\pi^*\in \Pi: \mu(\Rcal_{\pi^*}) = \max_{\pi\in\Pi} \mu(\Rcal_\pi)\}$. 

The main challenge in this optimization problem is the fact that there is no analytical or straightforward way to infer how changing $\pi$ changes $\mu(\Rcal_\pi)$. If there exists a differentiable map from $\pi$ to $\mu(\Rcal_\pi)$, one could locally increase the stability region by perturbing the policy in the direction of $\partial \mu(\Rcal_\pi)/\partial \pi$. However, except for extremely simple systems, such a map cannot be derived analytically. In this work, we construct a bridge between these two objects by an auxiliary function, which is an evolving Lyapunov function that is learned alongside the policy. 

Our goal is to construct a sequence of policies $(\pi_1, \pi_2, \ldots)$ that gives a sequence of RoAs $(\Rcal_{\pi_1}, \Rcal_{\pi_2}, \ldots)$, such that $\Rcal_{\pi_n}\xrightarrow{\mu} \bar{\Rcal}$ as $n\to \infty$, where $\bar{\Rcal}\subseteq\Dcal$ is the largest achievable RoA that is constrained by the physical limitations of the system (see~\Cref{sec:control_lyapunov_function} for the characterization of $\bar{\Rcal}$ using the concept of control Lyapunov function). To achieve this goal, we need to address two challenges: {\bf 1)} Approximating $\Rcal_{\pi_n}$ for a fixed $\pi_n$ and {\bf 2)} Using $(\Rcal_{\pi_n}, f_{\pi_{n}})$ to find $\pi_{n+1}$. Next section, explains our proposed method to address these two challenges.

\section{Proposed Method}
\label{sec:method}

\indent
For a function $g$, let's define its sublevel set with level value $a$ as $\Scal_a(g):=\{\xb\in\RR^d: g(\xb)<a\}$. The index $n\in\NN\cup \{0\}$ refers to a phase of the algorithm. At phase $n$, let $\Rcal_{\pi_n}$ be the RoA of the closed-loop system~\labelcref{eq:discrete_system} that is induced by the state-feedback policy $\pi_n$. It can be shown that there exists an optimal Lyapunov function $V_{\pi_n}$ with a level value $c_n$, such that $\Rcal_{\pi_n}=S_{c_n}(V_{\pi_n})$~\citep{vannelli1985maximal}. Therefore, the information of $\Rcal_{\pi_n}$ is encoded in $(V_{\pi_n}, c_n)$. 

Starting with a conservative controller (e.g., a quadratic controller for the linearized system), our method inductively constructs a sequence of policies $(\pi_1,\pi_2,\ldots)$ that eventually converges to a policy with maximal achievable RoA. See~\Cref{fig:cartoon_and_graphical_RoA}(Left) for an overall schematic of the algorithm.

Each step of this inductive process is called a~\emph{phase} of the algorithm. Each phase consists of two sub-phases: {\bf 1)} Learning the Lyapunov function and the RoA corresponding to the policy of that phase {\bf 2)} Updating the policy to enlarge the RoA. The RoA estimation sub-phase finds a Lyapunov function $V_{\pi_n}$ and level value $c_n$, such that $\Scal_{c_n}(V_{\pi_{n}}) = \Rcal_{\pi_{n}}$. Then, the policy update sub-phase learns a new policy $\pi_{n+1}$, such that $\Scal_{c_n}(V_{\pi_{n}})\subseteq\Rcal_{\pi_{n+1}}$. These conditions need to be satisfied for every $n$ that consequently limits the class of treatable systems. Before delving into the algorithmic implementation,~\Cref{sec:theoretical_discussion} provides necesary theoretical insights into this favorable class of systems along with other theoretical considertations for the proposed multi-phase growing algorithm.

\subsection{Theoretical Discussion}
\label{sec:theoretical_discussion}
We start with the assumptions that are necessary for the practical applicability of the method. Then, we show for which class of systems these assumptions are satisfied.

\begin{assumption}
    \label[assumption]{assumption:continuously_changing_roa_with_policies}
    Let $R:\Pi\to 2^\Xcal$ be a set-valued function defined as $R(\pi) = \Rcal_\pi$. Let $d_\Pi:\Pi\times\Pi\to \RR_+$ and $d_\Xcal:2^\Xcal \times 2^\Xcal\to \RR_+$ be some specified metrics in the space of policies and the space of all subsets of the state space, respectively. Then, the map $R$ is assumed to be continuous with respect to the topologies induced by the metrics $d_\pi$ and $d_\Xcal$.
\end{assumption}

This assumption indicates that a small change in the policy leads to a small change in the RoA that it induces. Formally speaking, let $\Rcal_{\pi_{n}}$ and $\pi_n$ be the RoA and the policy of phase $n$.~\Cref{assumption:continuously_changing_roa_with_policies} states that for every $\epsilon>0$, one can choose small enough $\delta>0$ such that if $\pi$ satisfies $d_\Pi(\pi_n, \pi)<\delta$, then $\mu(\Rcal_{\pi_{n}}\triangle \Rcal_{\pi})<\epsilon$. Next, we show that hyperbolic systems fulfill this assumption\footnote{The operator $\triangle$ shows the symmetric difference between two sets.}. 

\renewcommand*{\proofname}{Proof sketch}
\begin{theorem}[Persistance of the stability boundary with variations in the policy]
\label{thm:persistance_of_roa_with_policy_variations}
Consider the closed-loop hyperbolic system $\xb_{k+1}=f_\pi(\xb_k)$ with policy $\pi$. For a certain policy $\pi=\tilde{\pi}$, let $(\bar{\xb}_{\tilde{\pi}}$, $\Rcal_{\tilde{\pi}})$ be an asymptotically stable equilibrium point and its corresponding RoA. Then, for every $\epsilon>0$, there exists $\delta>0$ such that for every $\pi'$ with $d_\Pi(\tilde{\pi}, \pi')<\delta$, we have $\mu(\Rcal_{\tilde{\pi}} \triangle \Rcal_{\pi'})<\epsilon$.
\end{theorem}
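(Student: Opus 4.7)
The plan is to exploit structural stability of hyperbolic systems, which says that $C^1$-small perturbations of the dynamics preserve the qualitative phase portrait near a hyperbolic invariant set. First, since $f_{\tilde{\pi}}$ is hyperbolic at $\bar{\xb}_{\tilde{\pi}}$, the matrix $J_{f_{\tilde{\pi}}}(\bar{\xb}_{\tilde{\pi}}) - I$ is invertible (no eigenvalue equals one), so applying the implicit function theorem to $f_{\pi}(\xb) - \xb = 0$ yields a unique equilibrium $\bar{\xb}_{\pi'}$ near $\bar{\xb}_{\tilde{\pi}}$ for every $\pi'$ with $d_\Pi(\tilde{\pi},\pi')$ sufficiently small, and by continuity of the spectrum this equilibrium remains hyperbolic and asymptotically stable. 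The same argument, applied to the finitely many unstable hyperbolic equilibria (and, more generally, hyperbolic critical elements) on $\partial\Rcal_{\tilde{\pi}}$, shows that all of them persist and vary continuously with $\pi$.

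Next I would invoke the classical characterization of the stability boundary of a hyperbolic system (see Chiang et al., 1988; also~\cite{chiang2015stability} cited in the paper): under hyperbolicity and transversality of stable and unstable manifolds, $\partial\Rcal_{\tilde{\pi}}$ is the union of stable manifolds of the critical elements lying on the boundary. By the stable manifold theorem with parameters, each of these stable manifolds depends continuously on the dynamics in the $C^0$ topology on compact pieces, and hence, together with compactness of $\Xcal$, one obtains that $\partial\Rcal_{\pi'}$ is Hausdorff-close to $\partial\Rcal_{\tilde{\pi}}$ whenever $\pi'$ is $C^1$-close to $\tilde{\pi}$. This is the technical core of the argument and is the main obstacle; the delicate point is that one must use that only \emph{finitely many} critical elements sit on the boundary inside the compact set $\Xcal$, so their stable manifolds can be tracked uniformly.

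Finally, I would translate Hausdorff-closeness of the boundaries into smallness of the symmetric-difference measure. The symmetric difference $\Rcal_{\tilde{\pi}} \triangle \Rcal_{\pi'}$ is contained in a tubular neighborhood $N_\eta(\partial\Rcal_{\tilde{\pi}})$ of radius $\eta \to 0$ as $\delta \to 0$. Since $\partial\Rcal_{\tilde{\pi}}$ has topological dimension $n-1$ (as stated in~\Cref{sec:preliminaries}) and $\mu(\Xcal) < \infty$ with $\mu$ absolutely continuous with respect to Lebesgue measure, $\mu(N_\eta(\partial\Rcal_{\tilde{\pi}})) \to 0$ as $\eta\to 0$. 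Choosing $\delta$ small enough to make $\eta$ small enough to make this measure less than $\epsilon$ completes the proof. The whole argument therefore reduces to the well-known structural stability of hyperbolic systems plus a straightforward measure-theoretic wrap-up; hyperbolicity is precisely the property that makes~\Cref{assumption:continuously_changing_roa_with_policies} hold along the iterates produced by the algorithm.
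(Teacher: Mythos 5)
Your proposal follows essentially the same route as the paper's proof sketch: characterize $\partial\Rcal_{\tilde{\pi}}$ as the union of stable manifolds of the hyperbolic critical elements on the boundary (Chiang et al.), establish persistence of those equilibria and their stability type via the implicit function theorem and continuity of the spectrum, and conclude that the boundary (hence the RoA) varies continuously with the policy. The only addition is your explicit measure-theoretic wrap-up converting Hausdorff-closeness of the boundaries into smallness of $\mu(\Rcal_{\tilde{\pi}}\triangle\Rcal_{\pi'})$, a step the paper leaves implicit; this is a welcome completion rather than a different argument.
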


\begin{proof}
The first step to prove this result is to characterize the RoA in terms of the properties of the dynamics function $f_\pi$. As $\Rcal_{\tilde{\pi}}$ is characterized by its boundary, we focus our attention on the stability boundary $\partial\Rcal_{\tilde{\pi}}$. The critical elements\footnote{For theoretical discussion, we focus on critical elements that are equilibrium points. Similar results exist for other types of critical elements such as limit cycles but are left out of this work for brevity} of the dynamical system determine the structure of the stability boundary. Suppose $f_\pi$ is a diffeomorphism and all equilibrium points on $\partial\Rcal_{\tilde{\pi}}$ are hyperbolic. Moreover, let the stable and unstable manifolds of the equilibrium points on $\partial\Rcal_{\tilde{\pi}}$ intersect transversally\footnote{Roughly speaking, the manifolds intersect in a generic way.}. Finally, assume that every trajectory on $\partial\Rcal_{\tilde{\pi}}$ approaches one of the equilibrium points. If $\{\xb_1, \xb_2, \ldots\}$ are the hyperbolic equilibrium points on $\partial\Rcal_{\tilde{\pi}}$, $\partial\Rcal_{\tilde{\pi}}$ is completely characterized by
{\small
\begin{equation}
\partial \Rcal_{\tilde{\pi}} = \cup_i W^s(\xb_i).
\end{equation}
}
See Theorems 9-11 in~\citet{chiang2015stability} for the detailed proof. Hence, to show the persistance of $\partial \Rcal_{\tilde{\pi}}$, it is enough to show the persistance of the equilibrium points that live on $\partial \Rcal_{\tilde{\pi}}$ and the persistance of their stability condition. As a result of the continuity of $f(\xb, \pi(\xb))$ and $\pi(\xb)$ w.r.t.~their arguments, implicit function theorem guarantees that small perturbations to $\pi$ cause small changes in the hyperbolic equilibrium points~\citep{krantz2012implicit}. If $\xb^*_{\tilde{\pi}}$ is a hyperbolic equilibrium point of $\xb_{k+1}=f_\pi(\xb_k)$ for the policy $\pi=\tilde{\pi}$, there exists a $\delta>0$ and a neighborhood $U$ of $\xb^*_{\tilde{\pi}}$ that contains a unique hyperbolic equilibrium point $\xb^*_{\pi'}$ for every $\pi'\in \{\pi\in\Pi: d_\Pi(\pi, \tilde{\pi})<\delta\}$. Similarly, the continuity of the eigenvalues of $J_{f_{\pi}}(\xb_\pi)$ w.r.t.~$\pi$ affirms that the perturbed equilibrium point $\xb^*_{\pi'}$ has the same stability condition as $\xb^*_{\tilde{\pi}}$.

As stated above, $\partial\Rcal_{\tilde{\pi}}$ is characterized by the stable manifolds of equilibrium points that live on it. It was also shown that the hyperbolic equilibrium points on $\partial \Rcal_{\tilde{\pi}}$ together with their stable and unstable manifolds change continuously with $\tilde{\pi}$. This results in a continuous change of $\partial \Rcal_{\tilde{\pi}}$, and consequently $\Rcal_{\tilde{\pi}}$, w.r.t.~small variations in $\tilde{\pi}$.
\end{proof}

This result suggests restricting the change of the policy at the policy update sub-phase of each growing phase of the algorithm to bound the change of its induced RoA. For parametric policies such as neural networks, this is achieved by cropping parameter values after every phase of training.

\newcommand{\sfsize}{0.12\textwidth}
\begin{figure}[t!]
  \begin{subfigure}[t]{0.35\textwidth}
    \centering
    \includegraphics[width=\textwidth]{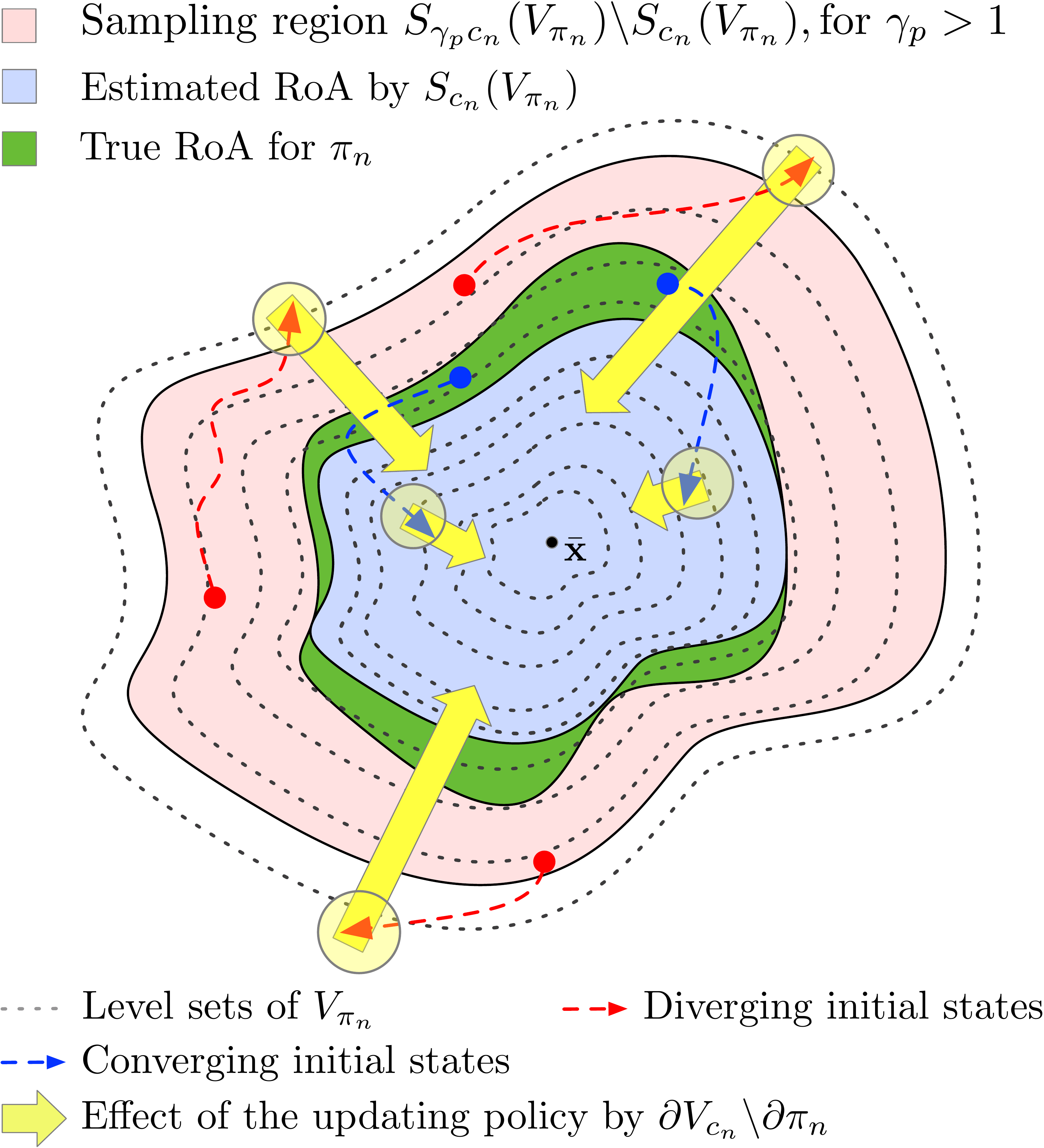}
  \end{subfigure}
  \begin{subfigure}[t]{0.65\textwidth}
    \centering
    \vspace{-33ex}
    \input{./figs/increasing_roas/tex_files/increasing_nested_roa_figure_stage_1.tex}\\
    \input{./figs/increasing_roas/tex_files/increasing_nested_roa_figure_stage_2.tex}\\
    \input{./figs/increasing_roas/tex_files/increasing_nested_roa_figure_stage_3.tex}\\
    \input{./figs/increasing_roas/tex_files/increasing_nested_roa_figure_stage_4.tex}\\
  \end{subfigure}
  \caption{(Left) Illustration of the policy update sub-phase. Given the estimated RoA, the policy update sub-phase (yellow arrows) tries to pull the diverging trajectories towards the level sets of the estimated Lyapunov function that reside inside the RoA. (Right) Visualizing the true ROA which is enlarged by the improved policy and is chased by a learned Lyapunov function. Green boundary: True RoA, Blue: $S_{c_n}(V_{\pi_n})$, Pink: $S_{\gamma c_n}(V_{\pi_n})$ for $\gamma=4$. The pink area shows the region from which the samples outside the estimated RoA is taken for both estimating the RoA and updating the policy. \vspace{-4ex}}\label{fig:cartoon_and_graphical_RoA}
\end{figure}

\subsection{Algorithm}
\label{sec:algorithm}
The implementation of the algorithm that was outlined in~\Cref{sec:method} comes in the following. Notice that the algorithm is a multi-phase inductive method. Hence, we only present two sub-phases of a single phase. The full algorithm consists of iterations over this phase.

\paragraph{RoA Estimation Sub-Phase: Learn $\Rcal_{\pi_{n}}$ from $(\Rcal_{\pi_{n-1}}, f_{\pi_{n}})$.}
This sub-phase takes the current policy ${\pi_{n}}$ and the previous RoA estimate $\Scal_{c_{n-1}}(V_{\pi_{n-1}})$ and outputs $\Scal_{c_{n}}(V_{\pi_{n}})$ that approximates $\Rcal_{\pi_{n}}$.
For this sub-phase, we improve the growing algorithm of~\citep{richards2018lyapunov} to learn an inner estimate of RoA. Our RoA estimation algorithm takes advantage of the theoretical results of~\citep{chiang1989stability} to improve the stability of training. The proposed loss function is more robust against local minima observed by~\cite{richards2018lyapunov} as is empirically shown in~\Cref{sec:experiments}.

The learning algorithm is verbally described here and the pseudo-code can be seen in~\Cref{alg:roa_estimation} in the Appendix. It starts with an initial conservative estimate of RoA and grows it during multiple iterations (phases). A neural network, as a universal function approximator, is employed to estimate a Lyapunov function for the system. At each iteration, the Lyapunov function is improved such that one of its sublevel sets gives a better estimate of the RoA. For this purpose, a fixed number of initial states are taken from a gap surrounding the current estimate of the RoA. The initial states are integrated forward in time by the closed-loop dynamics to produce the solution $\Phi(\xb, k)$ for each initial state $\xb$. The final state of each trajectory determines the stability of its corresponding initial state. We train the Lyapunov function as a classifier. The current estimate of the RoA produced by the current Lyapunov function is called the largest stable sublevel set. The initial states are labelled~\emph{stable} if their trajectory enters the largest stable sublevel set and are labelled~\emph{unstable} otherwise. We denote these two sets of initial states as $\XX^\text{IN}$ and $\XX^\text{OUT}$ respectively and minimize the loss function 

{\footnotesize
\begin{equation}
  \label{eq:roa_update_objective}
  \Lcal(V) = \sum_{\xb\in\XX^\text{IN}}[V(\xb) - \bar{c}] - \sum_{\xb\in\XX^\text{OUT}}[V(\xb) - \bar{c}] +\lambda_\text{RoA}\sum_{\xb\in\XX^\text{IN}} \Delta  V(\xb) + \lambda_\text{monot}\sum_{\xb\in\XX^\text{IN}} [V(\xb) - V_{\pi_{n-1}}(f_{\pi_{n-1}}(\xb))]^2
\end{equation}
}to update the Lyapunov function where $\bar{c}$ is fixed to a constant value ($1$ throughout this work). The idea is to absorb this degree of freedom in $V$ to ease training. The loss function is minimized by automatic differentiation and Stochastic Gradient Descent (SGD) with respect to the parameters of the neural network that realizes $V$. Once it is minimized, a line search is carried out on the level value to obtain $c_n$ such that $\Scal_{c_n}(V_{\pi_n})$ does not exceed the true RoA. The same process repeats for a certain number of iterations until a good inner estimate of the RoA is achieved. An important component in this process is the size of the gap around the largest stable sublevel set of each iteration of the algorithm. This gap is produced by $\Gcal_{n} = \Scal_{\gamma_r c_n}(V_{\pi_{n}}) \backslash \Scal_{c_n}(V_{\pi_{n}})$ with $\gamma_r > 1$. The size of this gap is controlled by $\gamma_r$. Larger values of $\gamma_r$ gives a faster convergence but less stable training.

To give an intuitive idea of the terms in~\Cref{eq:roa_update_objective}, the first two terms encourage the Lyapunov function to change in a way that its sublevel set $S_{\bar{c}}(V)$ includes the stable initial states $\XX^\text{IN}$ and excludes the unstable initial states $\XX^\text{OUT}$. The third term weighted by $\lambda_{\rm RoA}$ encourages the negative definiteness of $\Delta V$ on the RoA. The last term is inspired by the constructive method of~\citep{chiang1989stability} that accelerates capturing the entire RoA (see~\Cref{sec:improved_roa_estimation_constructive_term}). To satisfy Lyapunov conditions, $V$ needs to be positive definite on its domain. Rather than treating this condition in the loss function, we encode it in the architecture of the neural network using the construction proposed by~\cite{richards2018lyapunov}. Check~\Cref{sec:lyapunov_architecture} for the detailed description of the architecture.

\paragraph{Policy Update Sub-Phase: Learn $\pi_{n+1}$ from $(\Rcal_{\pi_n}, f_{\pi_{n}})$:}
This sub-phase of the algorithm uses the estimated Lyapunov function $V_{\pi_n}$ to update the policy so that the new policy gives rise to a larger RoA. The idea is to change the policy in a way that the unstable trajectories starting from around the current RoA enter the RoA and become stable. Let $\Dcal\subseteq\Xcal$ be the working domain of the system around the equilibrium $\bar{\xb}$. Given a hypothesis class of feasible policies $\Pi$, only a subset of $\Dcal$ is stabilizable. Assume $\bar{\Bcal}\subseteq\Dcal$ is the largest stabilizable subset with $\mu(\bar{\Bcal}) = \bar{\mu}$. Therefore, an attempt to improve $\pi_n$ amounts to appending points from $\bar{\Bcal}\backslash \Rcal_{\pi_n}$ to $\Rcal_{\pi_n}$. The set $\bar{\Bcal}$ is not fully known in advance but some of its properties can be derived. Especially, for system~\labelcref{eq:discrete_system}, if $f, \pi\in C^\infty$, the maximum stabilizable set $\bar{\Bcal}$ whose measure materializes as $\bar{\mu}$ is compact and connected. Using this theoretical result, if $\Rcal_{\pi_n}\subsetneq\bar{\Bcal}$, the stabilizable states can be chosen from a gap around $\Rcal_{\pi_n}$.

Because the RoA estimation sub-phase estimates it as a sublevel set of a Lyapunov function, i.e. $\Rcal_{\pi_n} \approx S_{c_n}(V_{\pi_n})$), the sampling gap is constructed as $\Gcal_n = S_{\gamma_p c_n}(V_{\pi_n}) \backslash S_{c_n}(V_{\pi_n})$ for a $\gamma_p>1$. To make sure the policy does not destabilize the regions that are already stabilized in the previous phases, the algorithm also samples initial states from within $\Scal_{c_n}(V_{\pi_n})$. All sampled initial states are integrated forward for $L_p$ steps and the policy is updated via minimizing the loss function 

\begin{equation}
  \label{eq:policy_update_objective}
  \Lcal(\pi) = \hspace{-2ex}\sum_{\xb\in\Gcal_n\cup S_{c_n}(V_{\pi_n})}\hspace{-2em}[1_{[V_{\pi_n}(\Phi_\pi(\xb, L_p)) < c_n]} + \lambda_{\rm u}1_{[V_{\pi_n}(\Phi_\pi(\xb, L_p)) > c_n]}] V_{\pi_n}(\Phi_\pi(\xb, L_p)).
\end{equation}

We implement the policy as a differentiable function such as a neural network and use automatic differentiation and SGD for minimization. The parameters of the policy appear in~\labelcref{eq:policy_update_objective} via the end state $\Phi_\pi(\xb, L_p)$ of the closed-loop trajectories. It is clear in~\labelcref{eq:policy_update_objective} that minimizing $\Lcal(\pi)$ with respect to $\pi$ while the Lyapunov function is fixed, pushes the trajectories towards the areas where the Lyapunov function assumes smaller values. This affects both stable and unstable trajectories while its influence on unstable trajectories can be magnified by $\lambda_u>1$. The detailed pseudo-code of this sub-phase can be found in~\Cref{alg:policy_update} in the Appendix. The quality of the training signal is analysed next.

\textit{Theoretical analysis of the learning signal---} The effect of the policy $\pi$ on $\Lcal(\pi)$ passes through the Lyapunov function $V$ as can be seen in~\labelcref{eq:policy_update_objective}. Unlike the conventional Lyapunov redesign method in control theory where the Lyapunov function is fixed, here the Lyapunov function itself is learned by the RoA estimation sub-phase of the algorithm. Hence, an ill-conditioned $V$ can harm the policy update phase. To take a closer look at this issue, we expand the learning signal analytically. Let the policy $\pi$ be a function of the states parameterised by $\psi$. Let $T=L_p$ be the time step of the final state of the trajectory. The learning signal to update the policy is proportional to $\partial \Lcal / \partial \psi$ expanded as

{\small
\begin{equation}
  \frac { \partial \Lcal } { \partial \psi } = \frac { \partial \Lcal  } { \partial \mathbf { x } _ { T } } \sum _ { 1 \leq k \leq T } \left(  \frac { \partial \mathbf { x } _ { T } } { \partial \mathbf { x } _ { k } } \frac { \partial ^ { + } \mathbf { x } _ { k } } { \partial \psi } \right).\label{eq:finaltime_loss_derivative}
\end{equation}
}
by applying the chain rule for differentiation. 

The term $\partial ^ { + } \mathbf { x } _ { k } / \partial \psi$ is the single-step effect of $\psi$ on $\xb_k$ when $\xb_{k-1}$ is fixed. We discuss every term in this equation in the following. Observe that $\partial \Lcal / \partial \xb_T$ is multiplied the summation, i.e., its small value diminishes the entire signal. It can be expanded as

{\small
\begin{equation}\label{eq:dl_dxT}
  \frac{ \partial \Lcal(\xb) }{\partial \xb}|_{\xb = \xb_T} = \frac{\partial \Lcal(V)}{\partial V}|_{V=V(\xb_T)} \frac{\partial V(\xb)}{\partial \xb}|_{\xb = \xb_T}.
\end{equation}}
The first term on the r.h.s. does not vanish as it is $1$ or $\lambda_u$ for the loss function defined by~\Cref{eq:policy_update_objective}. The second term of the r.h.s depends on the slope of $V$ evaluated at the final state of the trajectory. One potential pathological condition occurs for stable and long trajectories. The reason is that $\nabla_\xb V(\xb)$ continuously vanishes at the equilibrium (see~\Cref{lem:dv_dx_at_equilibrium} in the Appendix). Therefore, for long stable trajectories where $\xb_T$ is too close to the equilibrium, the learning signal to update the policy will be too small. Too long trajectories are detrimental for unstable trajectories as well because the states may grow exponentially and cause damages to the system. Hence, the length of the trajectory is an important design parameter that needs special attention when applying our proposed method in practice. The terms inside the sum only depend on the properties of the dynamics, not the Lyapunov function. Specifically, the first term shows how long the system keeps the memory of the past states and the second term shows how sensitive the system is with respect to the controller parameters. A more detailed discussion is deferred to~\Cref{sec:weak_learning_signal}.

Before presenting the performance of the method empirically, it is important to note that our method assumes the model of the system is given. The following remark discusses to what extent the model of the system is needed and this requirement can be relaxed in the future.

\begin{remark}
  \label[remark]{rem:model_based_assumption}
  The RoA estimation phase does not need the model of the system. The system can be launched from sampled initial states and the generated trajectories are all we need in~\labelcref{eq:roa_update_objective}. The policy update phase of the algorithm requires a local estimate of the system to be able to compute $\partial V/\partial \pi$. The locality of the model is inversely proportional to the length of the trajectory $L_p$ in~\labelcref{eq:policy_update_objective}. A detailed theoretical discussion on this point is deferred to~\Cref{sec:model_based_assumtion}.
\end{remark}

\clearpage
\section{Experiments}
\label{sec:experiments}

\newcommand{\hspacesubfigs}{1ex}
\begin{figure}[t!]
    \centering
    \begin{subfigure}[t]{0.23\textwidth}
        \centering
        \includegraphics[width=1\textwidth]{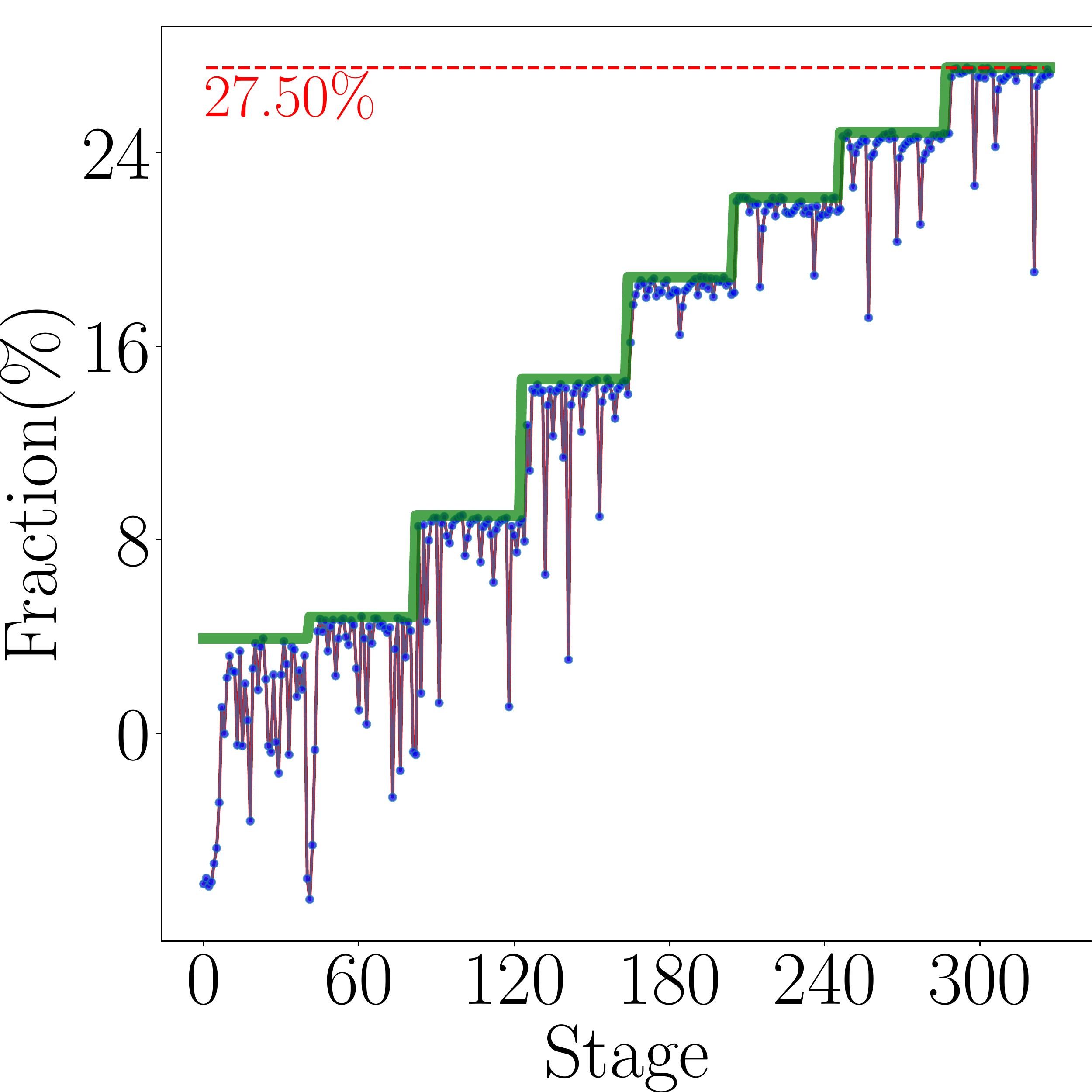}
        \caption{\footnotesize Unimproved process}\label{fig:pendulum_roa_and_policy_fractions_plots_original}        
    \end{subfigure}
    \hspace{\hspacesubfigs}
    \begin{subfigure}[t]{0.23\textwidth}
        \centering
        \includegraphics[width=1\textwidth]{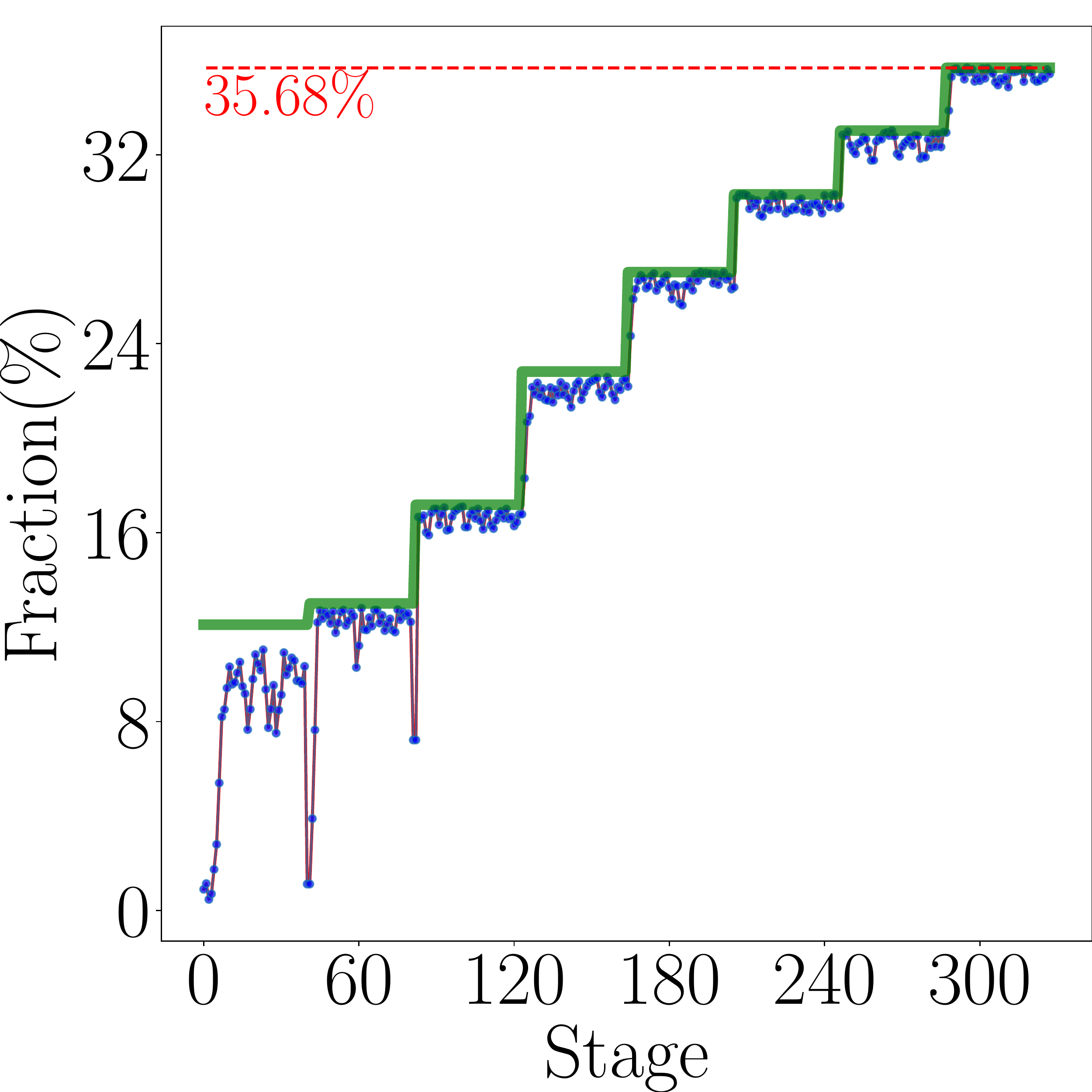}
        \caption{\footnotesize Improved process}\label{fig:pendulum_roa_and_policy_fractions_plots_monotone}        
    \end{subfigure}
    \hspace{\hspacesubfigs}
    \begin{subfigure}[t]{0.23\textwidth}
        \centering
        \includegraphics[width=1\textwidth]{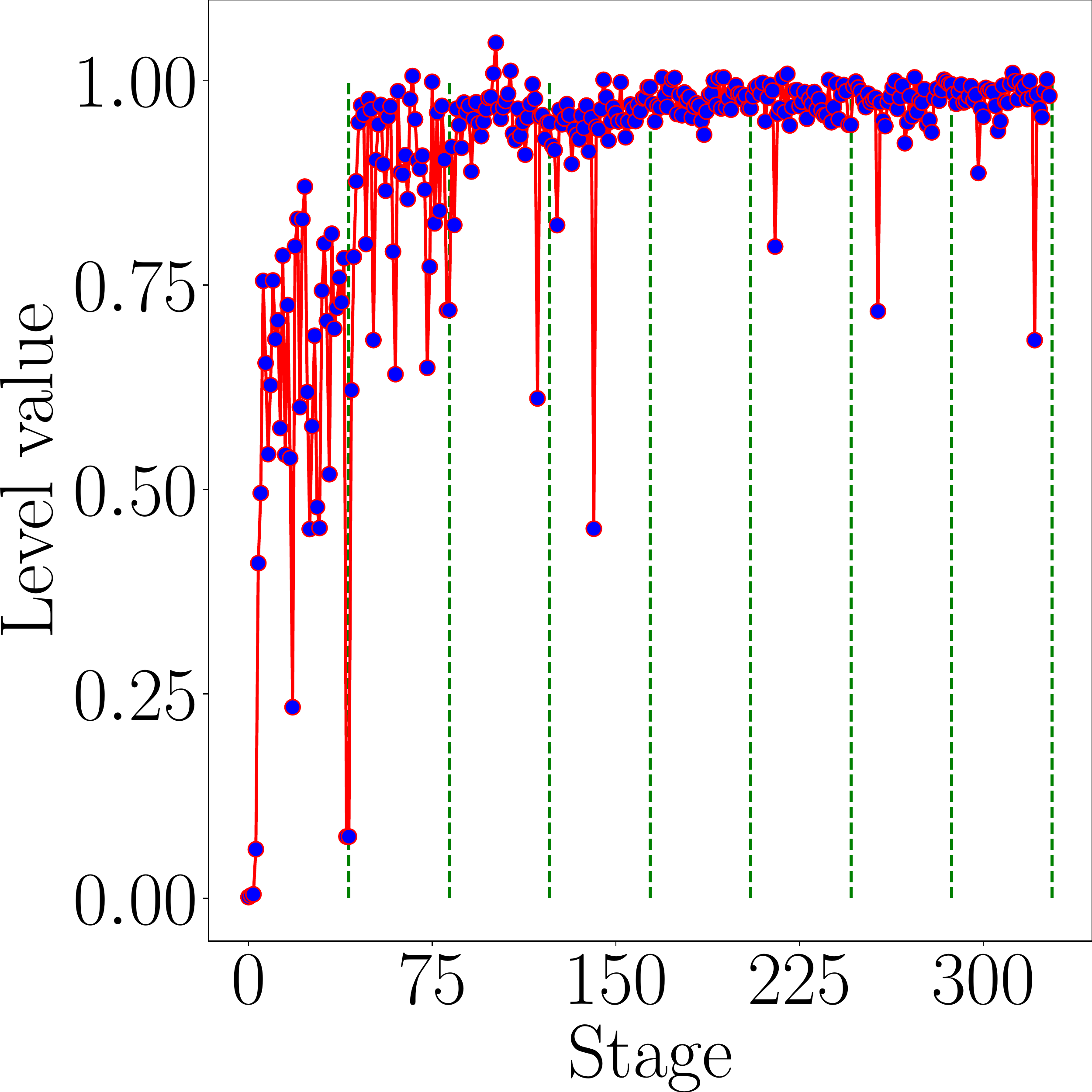}
        \caption{\footnotesize Level values}\label{fig:pendulum_level_values}       
    \end{subfigure}
    \hspace{\hspacesubfigs}
    \begin{subfigure}[t]{0.23\textwidth}
        \centering
        \includegraphics[width=1\textwidth]{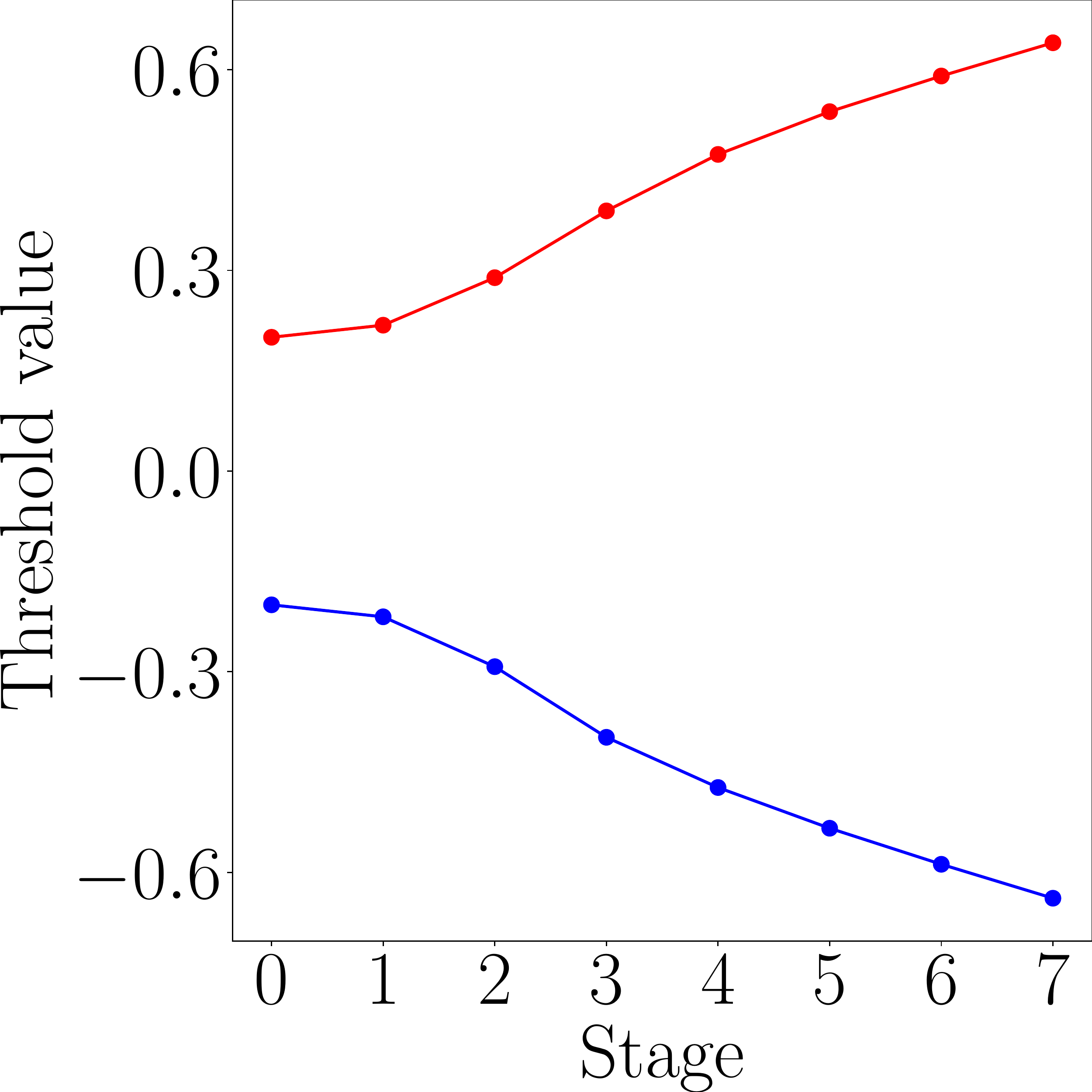}
        \caption{\footnotesize Policy parameters}\label{fig:pendulum_threshold_values}        
    \end{subfigure}
    \caption{\footnotesize (a), (b) The size of the RoA against the iterative stages (phases) of the Algorithms where (a) uses the RoA estimation method of~\cite{richards2018lyapunov} while (b) uses ours. The fraction is computed with respect to a rectangular domain around the equilibrium point that is large enough to enclose the RoA. Green: size of the true RoA. Each jump corresponds to a policy update sub-phase that increases the size of the true RoA. Red: The size of sublevel set that the RoA estimation sub-phase learns to approximate the RoA. After each policy update, the RoA estimation sub-phase takes multiple growth iterations to capture the true RoA as close as possible. (c) The trace of the level values corresponding to every iteration of the RoA estimation sub-phase. 
    (d) Red (Blue): The trace of the value of the upper (lower) threshold parameter of the policy during training. Each point corresponds to a policy update iteration.}\label{fig:pendulum_overall_plots}
\end{figure}

\begin{figure}[t!]
    \centering
    \begin{subfigure}[t]{0.23\textwidth}
        \centering
        \includegraphics[width=1\textwidth]{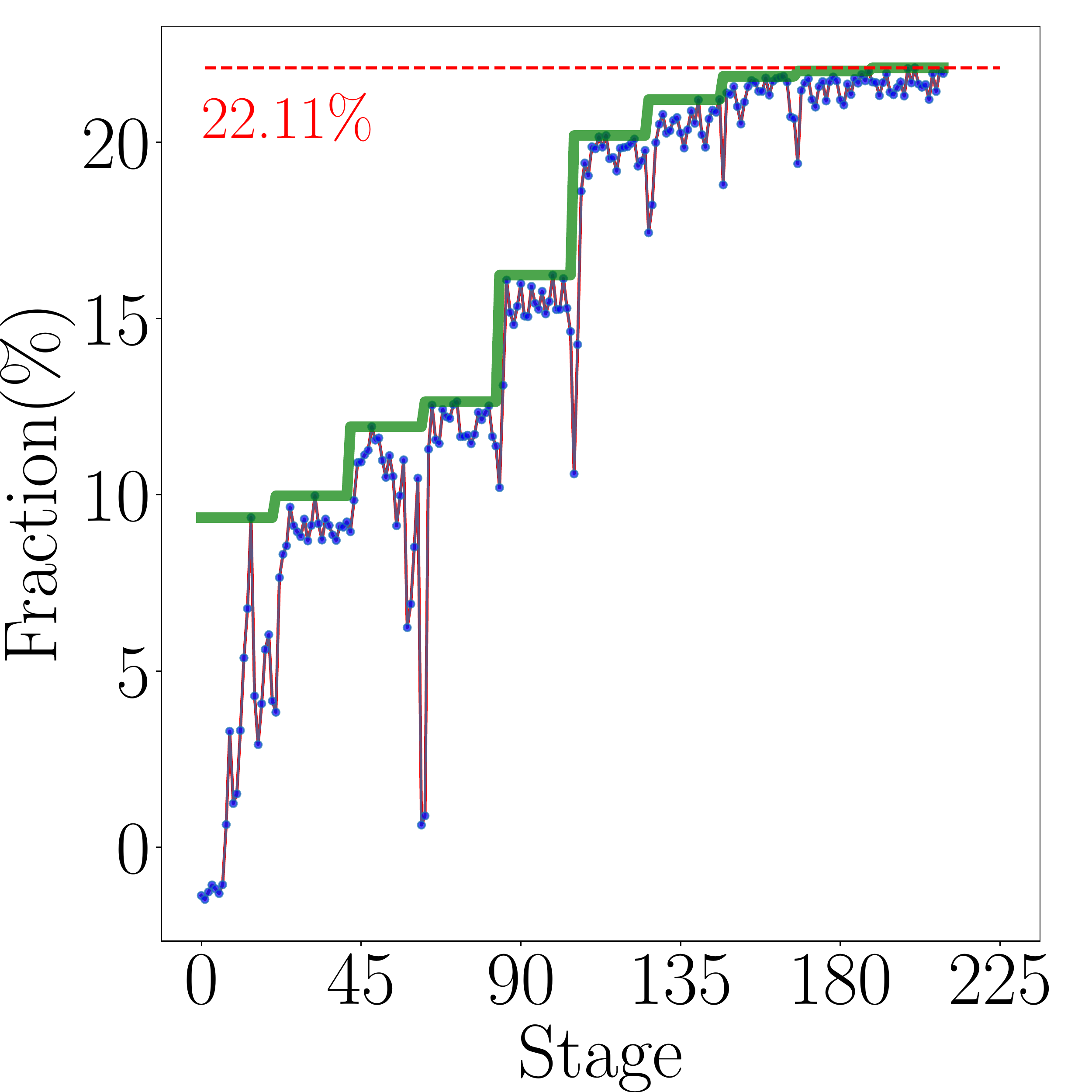}
        \caption{\footnotesize Unimproved process}\label{fig:pendulum_roa_and_policy_fractions_plots_trainable_slopes_original}        
    \end{subfigure}
    \hspace{\hspacesubfigs}
    \begin{subfigure}[t]{0.23\textwidth}
        \centering
        \includegraphics[width=1\textwidth]{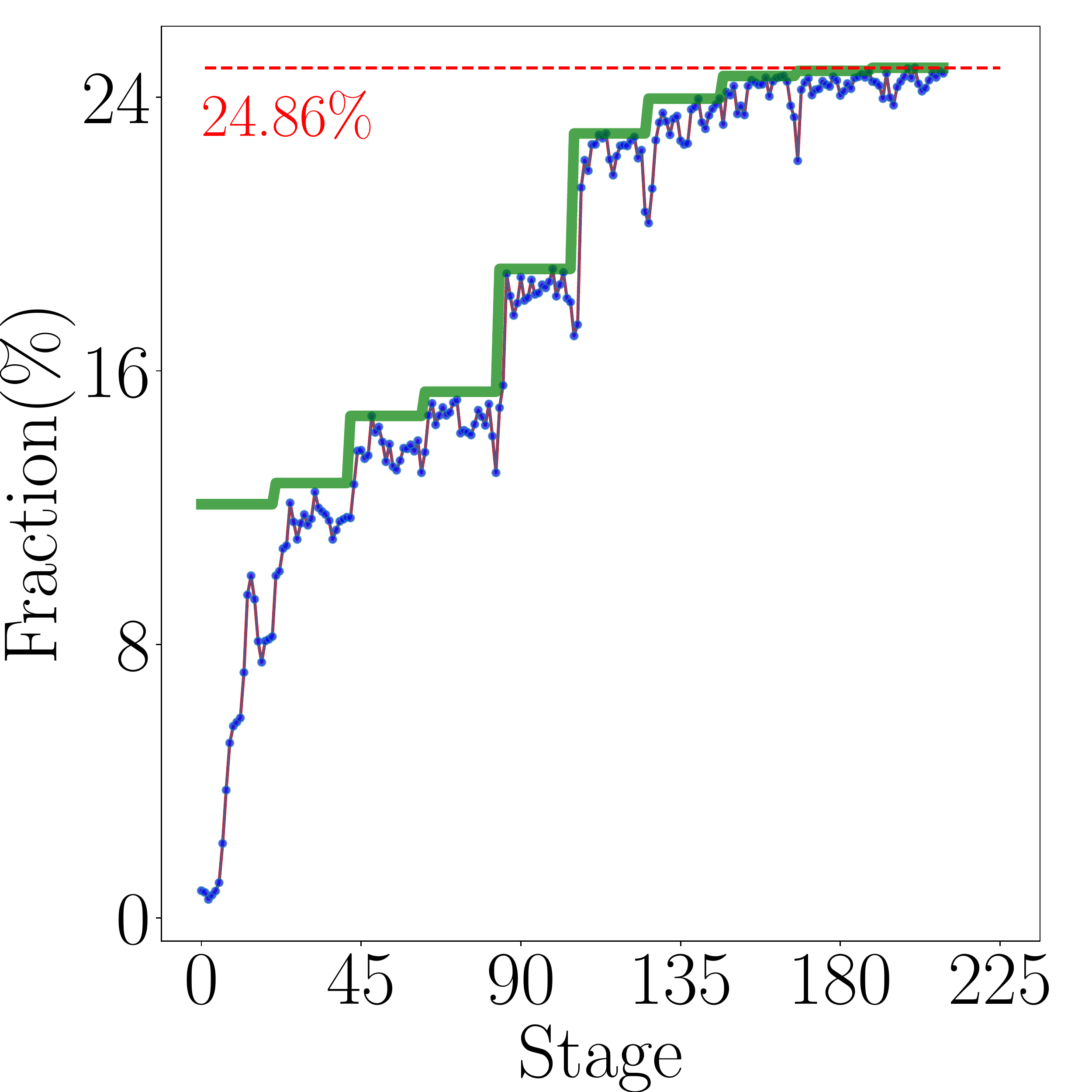}
        \caption{\footnotesize Improved process}\label{fig:pendulum_roa_and_policy_fractions_plots_trainable_slopes_monotone}        
    \end{subfigure}
    \hspace{\hspacesubfigs}
    \begin{subfigure}[t]{0.23\textwidth}
        \centering
        \includegraphics[width=1\textwidth]{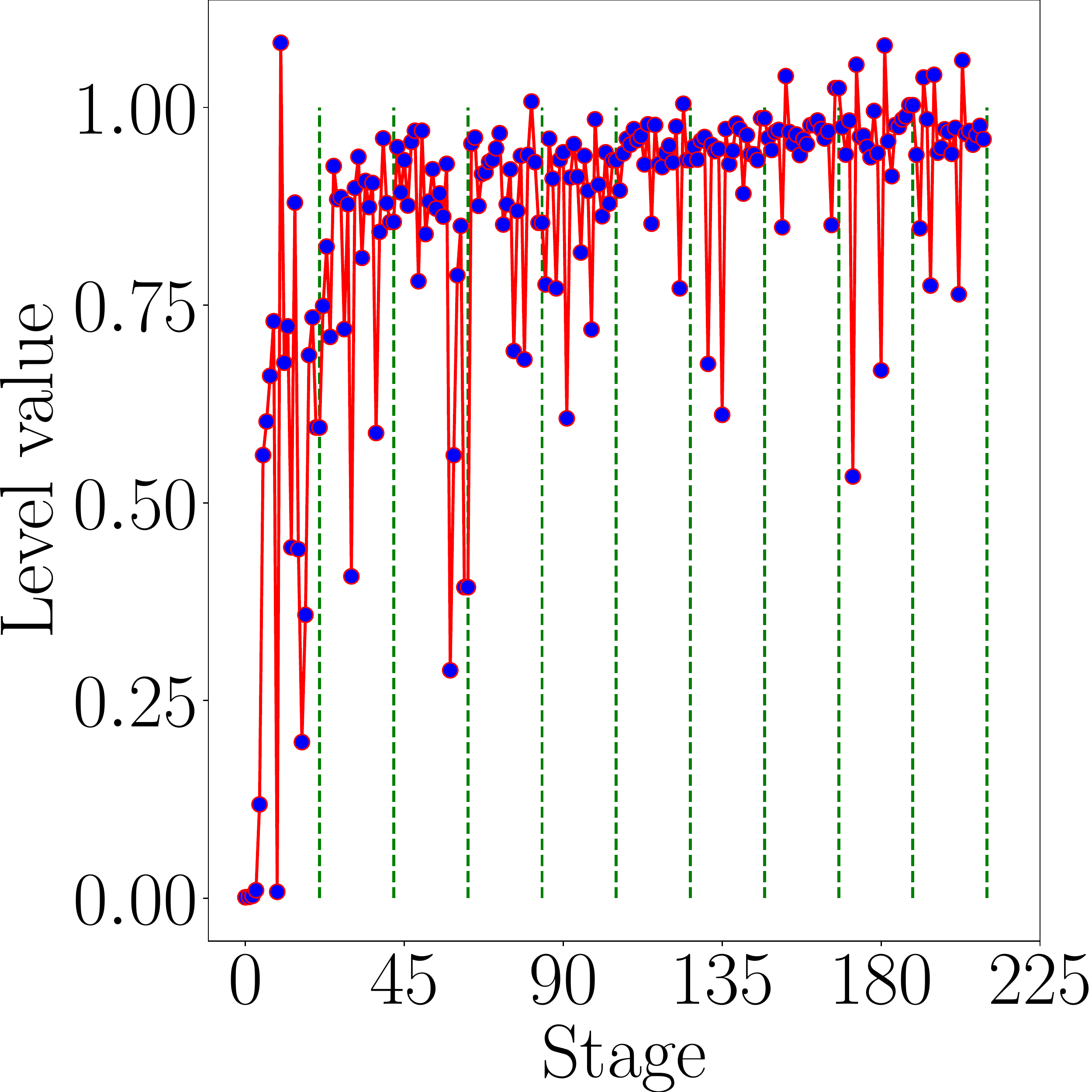}
        \caption{\footnotesize Level values}\label{fig:pendulum_level_values_trainabl_slopes}       
    \end{subfigure}
    \hspace{\hspacesubfigs}
    \begin{subfigure}[t]{0.23\textwidth}
        \centering
        \includegraphics[width=1\textwidth]{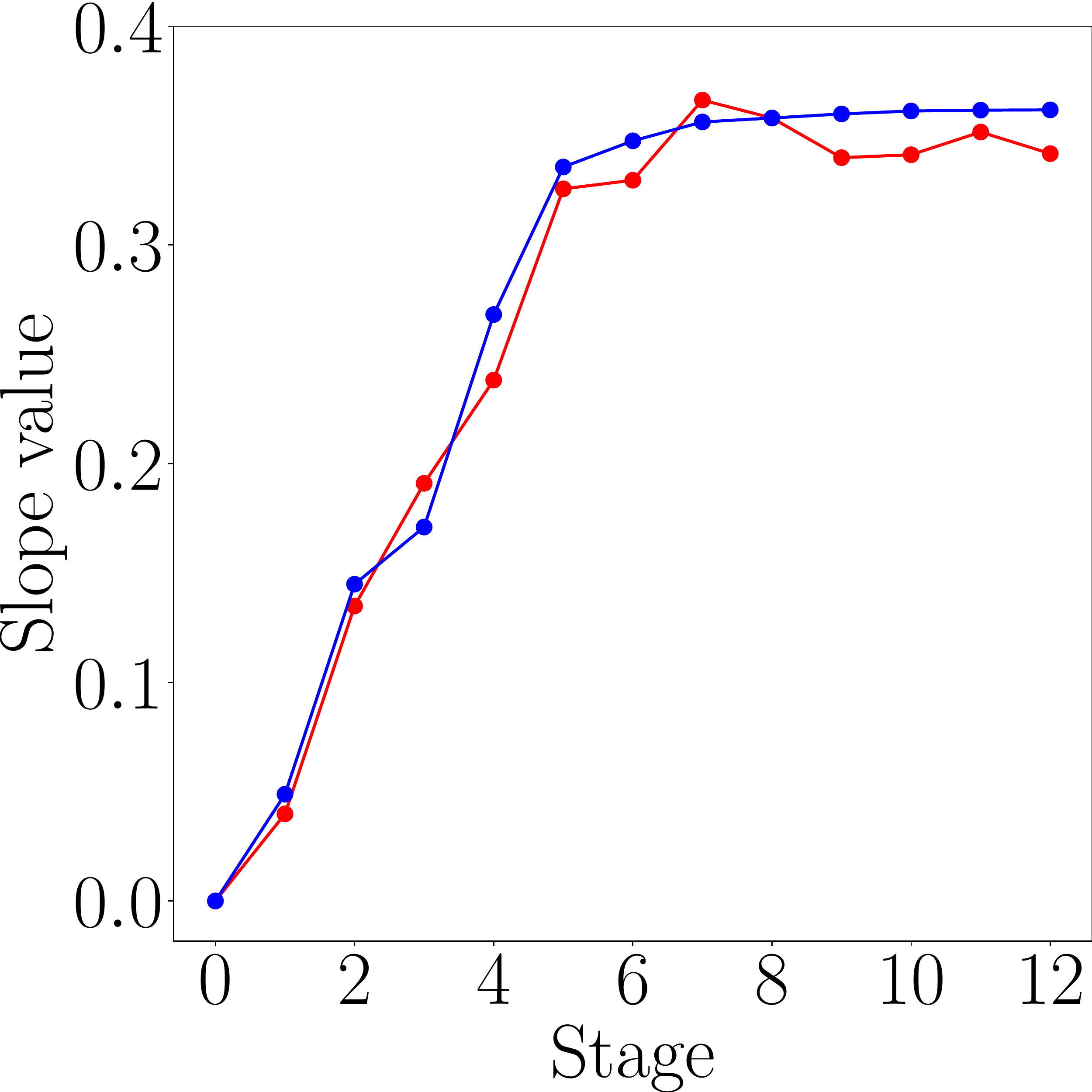}
        \caption{\footnotesize Policy parameters}\label{fig:pendulum_slope_values}        
    \end{subfigure}
    \caption{\footnotesize The same plots as~\Cref{fig:pendulum_overall_plots} with the exception that, here the threshold parameters are kept fixed at $a=0.2$ and $b=-0.2$ while the slopes $m_a$ and $m_b$ are trainable parameters.}.\label{fig:pendulum_overall_plots_trainable_slopes}\vspace{-2em}
\end{figure}

\begin{wrapfigure}{r}{0.3\textwidth}
    \centering
    \vspace{-2ex}
    \includegraphics[width=0.3\textwidth]{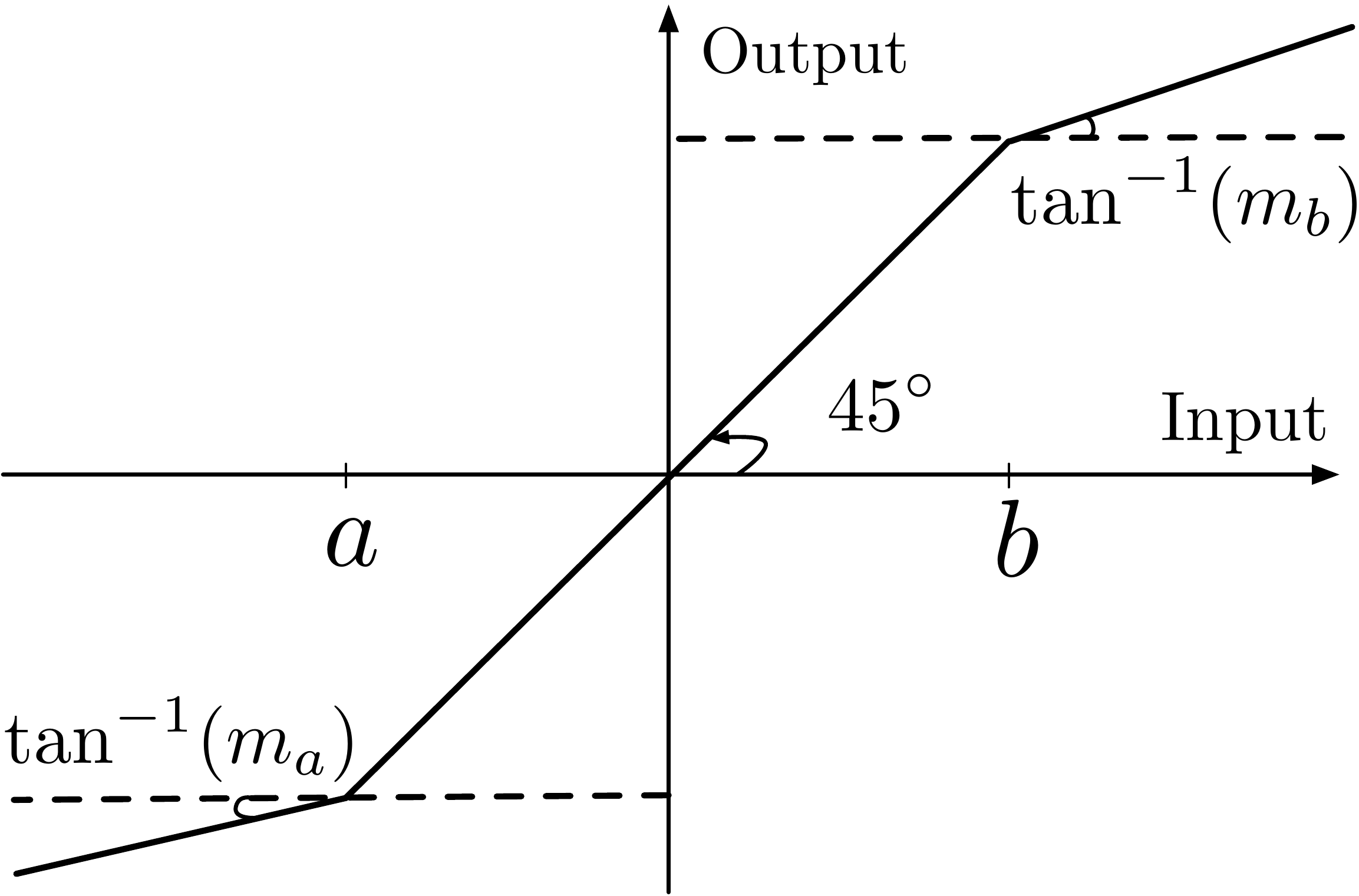}
    \caption{\footnotesize Loose saturation\vspace{-1ex}}\label{fig:sat_function}
\end{wrapfigure}
We consider an inverted pendulum system defined as $\dot{\theta} = \omega$ and $\dot{\omega} = \frac{g}{l}\sin(\theta) + \frac{u}{I} - \mu_f\frac{\omega}{I}$ where the state vector $\xb=(\theta, \omega)$ consists of the angle and angular velocity. Moreover $(g=0.81, l=0.5, I={\rm mass}\times l^2=0.25, \mu_f=0)$ are the acceleration of gravity, length, inertia, and friction coefficient. The scalar $u$ is the input force. The open-loop system (with $u=0$) has equilibrium points at $(\theta,\omega)=(k \pi, 0)$ with $k\in\ZZ$. We focus on the equilibrium point $(0, 0)$ in the frictionless setting where the system shows oscillatory behavior and consequently is not asymptotically stable (see~\Cref{sec:discretized_statespace} for system and modeling details). First, an LQR controller $K$ is designed for the linearized system around the origin (see the vector field and the initial RoA in~\Cref{fig:experiment_system_overall} in the Appendix). The control signal passes through a loose saturation function as $u=\pi_0(\xb;\psi) = \sat_\psi([\theta, \omega]\tran K)$. The function $\sat$ is parameterized by $\psi=(a, b, m_a, m_b)$ as illustrated in~\Cref{fig:sat_function}. In the first experiment, the slopes $m_a=0$ and $m_b=0$ are kept fixed where $a$ and $b$ are trainable parameters of the policy, i.e., $\psi=(a, b)$. The Lyapunov function $V(\cdot;\theta)$ is realized by a $3-$layer neural network parameterized by $\theta$. Each layer has $64$ neurons with a special architecture (see~\Cref{sec:lyapunov_architecture}) inspired by~\citep{richards2018lyapunov} followed by tanh activation function that imposes the positive definiteness of the entire network as is required by the Lyapunov conditions in~\labelcref{eq:lyapunov_conditions}. The chosen hyper-parameters for the SGD training can be found in~\Cref{sec:training_hyperparameters}.

The initial controller gives a small RoA since it is designed for the local linear approximation of the system. As the initial policy is LQR designed for the locally linearized model, $V(\cdot;\theta)$ is pre-trained by the quadratic function $0.1\theta^2 + 0.1\omega^2$.

After pre-training, sub-phases of the algorithm of~\Cref{sec:algorithm} are run alternately to capture the RoA and improve the policy. The green step-like plot in~\Cref{fig:pendulum_roa_and_policy_fractions_plots_original} and~\Cref{fig:pendulum_roa_and_policy_fractions_plots_monotone} shows the true size of the RoA. Each jump in the green plots shows one iteration of the policy update algorithm resulting in an increased RoA. The fluctuating red plot in~\Cref{fig:pendulum_roa_and_policy_fractions_plots_original} shows the size of the estimated RoA without our improvement over the RoA estimation algorithm of~\cite{richards2018lyapunov} while~\Cref{fig:pendulum_roa_and_policy_fractions_plots_monotone} shows the outcome of the presence of our proposed additional term in the loss function~\labelcref{eq:roa_update_objective}. It shows that the added term results in a less fluctuating estimate of the RoA, and when combined with the policy update sub-phase, gives a faster convergence to a larger RoA ($35.68\% \text{ vs } 27.50\%$ fraction of the domain volume after $7$ policy updates).

As stated in~\Cref{sec:algorithm}, $\bar{c}=1$ is not necessarily equal to $c_n$. After learning $V_n$ with $\bar{c}$ in~\labelcref{eq:roa_update_objective}, the algorithm searches for a value of $c_n$ such that the Lyapunov decrease condition is met for all states within the sublevel set $\Scal_{c_n}(V_n)$. It can be seen in~\Cref{fig:pendulum_level_values} that these values converge to $\bar{c}=1$ that can be perceived as a sign of the stable training of the algorithm.

The trace of the parameters of the policy is shown in~\Cref{fig:pendulum_threshold_values}. As these policy's trainable parameters represent the upper and lower limits of the loose threshold function, the policy learning algorithm updates them in the directions that decrease their suppressing effect. This is what we also expect from the physics of the system. Graphical visualization of the policy update and RoA estimation sub-phase is shown in~\Cref{fig:cartoon_and_graphical_RoA}(Right). Each row shows one phase of the algorithm. The policies are updated along the rows from top to bottom. Within one row, the policy is fixed and RoA is estimated from left to right (See~\Cref{fig:pendulum_nested_roas_full} in the appendix for a larger visualization).

In the second experiment, the threshold limits $a=-0.2$ and $b=0.2$ are kept fixed while the slopes $\psi=(m_a, m_b)$ are trainable parameters. The rest of the training setting remains the same as the previous experiment. Figures~\labelcref{fig:pendulum_roa_and_policy_fractions_plots_trainable_slopes_original} and~\labelcref{fig:pendulum_roa_and_policy_fractions_plots_trainable_slopes_monotone} shows that the policy update sub-phase enlarges the RoA (green plot) of the system while the RoA estimation sub-phase (red plot) manages to follow the new RoA after each policy update. Our improved RoA estimation algorithm results in a more monotonic convergence of the estimated RoA that ultimately learns a controller that induces a larger RoA ($24.86\% \text{ vs } 22.11\%$). Similar to~\Cref{fig:pendulum_level_values}, convergence of $c_n$ values to $\bar{c}=1$ can be seen in~\Cref{fig:pendulum_level_values_trainabl_slopes}. The trace of the upper and lower slopes are shown in~\Cref{fig:pendulum_slope_values}. Almost equal learned values for upper and lower slopes are expected due to the structural symmetry of the saturation function (\Cref{fig:sat_function}) that appears in the closed-loop system.

\vspace{-1ex}
\section{Related Work and Conclusions}

\label{sec:related_works}
We have proposed a two-player collaborative and iterative algorithm that iterates over two sub-phases that learn the Lyapunov function and use it to learn a controller to enlarge the RoA of the system. 

The existing approaches that are close to the purpose of this paper are those that simultaneously synthesize a controller and maximize the stability region. Our work does not put any limit on the class of considered systems except the generic condition of hyperbolicity. For restricted classes such as polynomial systems, Sums-of-Squares (SOS) method leads to a bilinear optimization that is solved by some form of alternation~\citep{jarvis2003some,majumdar2013control}. As a dual to Lyapunov-based methods,~\cite{majumdar2014convex} uses the notion of occupation measure to optimize a feedback controller for a polynomial system, but it has scalability issues due to its reliance on SDP optimization toolbox. Our work is different from this class of methods as our method is not limited to polynomial systems. Moreover, our method uses automatic differentiation that is naturally combined with neural networks to enjoy their superior scalability combined with SGD optimization. 

Among data-driven approaches, ~\cite{berkenkamp2017safe} use statistical models of the system to learn a controller with the assumption that the Lyapunov function is given. Our method is different as we learn the Lyapunov function and the controller together in an alternating fashion. Neural Lyapunov Control by~\cite{chang2019neural} is closer to our work while their approach is different from our method in multiple ways: They need to solve a costly global optimization problem for a component called~\emph{falsifier} to find the states on which the Lyapunov conditions are violated. We rather use the geometric properties of the level sets of the Lyapunov function to find the potentially unstable states that are chosen from a ring around the current stable sublevel set. In addition, Unlike~\cite{chang2019neural}'s method that needs to be done until the end to be usable on the system, the growing nature of our work allows the system be in action while the controller keeps improving.

In the context of reinforcement learning, our method can be seen as an actor-critic approach~\citep{grondman2012survey, bhasin2013novel,lillicrap2015continuous} where the actor tries to stabilize the system while the critic estimates the size of the RoA induced by the controller. The actor is the policy network and the critic is the network that implements the Lyapunov function.

In control theory, as the title of our work suggests, the proposed algorithm can be seen as an automatic version of the celebrated Lyapunov redesign method~\citep{khalil2002nonlinear, hwang2013robust} where the Lyapunov function and the closed-loop system are re-designed together iteratively with the purpose of enlarging the stability region.

In this work, we assume the model of the system is given. This condition can be relaxed as suggested by~\Cref{rem:model_based_assumption}. Investigating this relaxation can be considered for the future. For example, learning a local model that adapts at each iteration could be one solution that adds another component to the algorithm and turns it into a three-player collaborative game.

\clearpage

\newpage
\appendix

\section{Control Lyapunov Function and Maximal Stabilizable Set}
\label{sec:control_lyapunov_function}

The use of the Lyapunov theory to guide designing the input of a system has been made precise with the introduction of~\emph{control Lyapunov function (clf)}. A clf for a system of the form $\xb_{k+1}=f(\xb_k, \ub_k)$ is a $C^1$, radially unbounded function $V:\Xcal\to\RR_+$ if
\begin{align}
    &V(\mathbf{0})=0\quad\text{and}\quad V(\xb) > 0\quad&\forall \xb\in\Dcal\backslash \{\mathbf{0}\}\\
    &\inf_{\ub\in U} [V(\xb) - V(f(\xb, \ub))]\leq 0 \quad&\forall \xb\in\Dcal\backslash \{\mathbf{0}\}
\end{align}
Just as the existence of a Lyapunov function is necessary and sufficient conditions for the stability of an autonomous system, the existence of a clf is a necessary and sufficient condition for~\emph{stabilizability} of a system with control input. In other words, the existence of clf guarantees the existence of a controller that stabilizes the system for all initial states within a neighborhood around the equilibrium point. 

According to the definition of Lyapunov function in~\Cref{sec:preliminaries} and clf above, Lyapunov function assess the stability of a closed-loop system for a fixed controller while clf investigates the~\emph{existence} of a control signal that stabilizes the system in a domain $\Dcal$. In the definition of clf, no functional limitation is assumed for the control signal $\ub_k$. In practice, $\ub_k$ is produced by a state-feedback controller via the policy function $\pi\in\Pi$. Moreover, due to the implementation constraints, only a subset $\tilde{\Pi}\subseteq\Pi$ of these functions can be realized. Therefore, it is quite likely that $\mu(\Rcal_{\pi^*})<\mu(\Dcal)$ with $\mu$ be the Lebesgue measure, i.e., the best feasible controller cannot expand the RoA of the system to the entire $\Dcal$. Let $U(\xb)$ be the values that the control signal can take at state $\xb$. Then, we define
\begin{equation}
  \label{eq:maximal_stabilizable_set}
  \bar{\Rcal} = \sup_{\Bcal\subseteq \Dcal} \Bcal\quad\text{such that}\quad \exists~\text{clf $V$ on $\Bcal$ and a $\pi\in\tilde{\Pi}$ \text{materializes the RoA} $\bar{\Rcal}$}
\end{equation}
where $\bar{\Rcal}$ is the maximal stabilizable set as was used in~\Cref{sec:problem_statement}.

\section{Proofs}
In this section, a more detailed theoretical exposition of some of the material that is dropped from the main text due to space limitation is presented.

The following lemma shows that the derivative of a Lyapunov function vanishes at the equilibrium point. One can see the Lyapunov redesign method as an actor-critic algorithm where the Lyapunov function plays the role of the critic. As the learning signal for updating the actor (policy) passes through the derivative of the critic (Lyapunov function) due to the chain rule, the following lemma implies that getting closer to the equilibrium will weaken the information content of the signal for learning the policy.
\begin{lemma}\label[lemma]{lem:dv_dx_at_equilibrium}Derivative of a Lyapunov function at the origin:
  Let $\Xcal\subseteq \RR^d$ be a $d-$dimensional vector space and $V:\Xcal\to \RR$ be a continuous positive definite function, i.e., $V(\xb)>0$ for $\xb\neq \mathbf{0}$ and $V(\mathbf{0})=0$. Then, $\nabla_\xb V(\xb)|_{\xb=\mathbf{0}}=0$.
\end{lemma}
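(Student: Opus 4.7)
The plan is to observe that the hypotheses make the origin a global minimizer of $V$, and then invoke the first-order necessary condition for a minimum. Since $V(\mathbf{0})=0$ and $V(\mathbf{x})>0$ for all $\mathbf{x}\neq\mathbf{0}$ in a neighborhood of the origin, the point $\mathbf{x}=\mathbf{0}$ is a strict local (in fact global on $\mathcal{X}$) minimizer of $V$. The conclusion $\nabla_{\mathbf{x}}V(\mathbf{x})|_{\mathbf{x}=\mathbf{0}}=\mathbf{0}$ is then Fermat's theorem on extrema, provided $V$ is differentiable at the origin (an implicit assumption needed for the gradient to be defined; the statement should probably add this hypothesis).

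Concretely, I would argue by considering directional derivatives along arbitrary unit vectors $\mathbf{v}\in\mathbb{R}^d$. By differentiability at the origin, the two-sided limit
\begin{equation*}
\partial_{\mathbf{v}} V(\mathbf{0}) \;=\; \lim_{t\to 0}\frac{V(t\mathbf{v})-V(\mathbf{0})}{t} \;=\; \lim_{t\to 0}\frac{V(t\mathbf{v})}{t}
\end{equation*}
exists and equals $\langle \nabla V(\mathbf{0}),\mathbf{v}\rangle$. Positive definiteness forces $V(t\mathbf{v})\geq 0$ for all sufficiently small $t$, so the right-hand one-sided limit (as $t\downarrow 0$) is nonnegative, while the left-hand one-sided limit (as $t\uparrow 0$) is nonpositive. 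The existence of the two-sided limit then pins both one-sided limits to zero, giving $\langle \nabla V(\mathbf{0}),\mathbf{v}\rangle=0$. Since $\mathbf{v}$ was arbitrary, $\nabla V(\mathbf{0})=\mathbf{0}$.

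There is essentially no obstacle: the lemma is a textbook consequence of the first-order optimality condition, and the only subtlety is that the statement as written assumes continuity but not differentiability of $V$, so I would flag that the gradient's existence at $\mathbf{0}$ is a tacit additional hypothesis (otherwise the conclusion is vacuous or ill-posed). If one wanted to avoid even this assumption, one could phrase the result in terms of the Clarke subdifferential or one-sided Dini derivatives, where the same sign argument shows that $\mathbf{0}$ lies in any reasonable generalized gradient at the origin, but for the application in \Cref{eq:dl_dxT} the classical differentiable statement is what is needed.
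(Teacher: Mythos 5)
Your proof is correct and is essentially the same argument as the paper's: both exploit that positive definiteness makes the origin a minimizer and that the first-order behavior of $V$ there therefore forces the gradient to vanish (the paper phrases it as a contradiction by choosing the descent direction $c=-\epsilon g_i$ in a first-order expansion, you phrase it directly via the signs of the one-sided directional difference quotients). Your remark that differentiability at the origin is a tacit extra hypothesis is well taken and applies equally to the paper's proof, which invokes a first-order expansion while citing only continuity.
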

\begin{proof}
  We use the technique of proof by contradiction. Let $\gb = [g_1, g_2,\ldots, g_d]\tran= \nabla_\xb V(\xb)|_{x=\mathbf{0}}\neq 0$. Suppose there exists an index $i\in\{1,2,\ldots, d\}$ such that $g_i\neq 0$. Due to the continuity of $V$, we expand $V(\xb)$ at $\xb=\mathbf{0}$ in the direction of $g_i$ as
  \begin{equation*}
    V(0, \ldots, x_i=c, \ldots, 0) = V(\mathbf{0}) + c \frac{\partial V(\xb)}{\partial x_i}|_{\xb=\mathbf{0}}+ o(x_i)
  \end{equation*}
for an arbitrary value of $c$ close to $0$. Since $c$ is arbitrary, we choose $c=-\epsilon \frac{\partial V(\xb)}{\partial x_i}|_{\xb=\mathbf{0}} = -\epsilon g_i$. As we assumed $V(\mathbf{0})=0$, for $\epsilon$ sufficiently close to $0$, we can write
\begin{equation*}
  V(0, \ldots, x_i=c, \ldots, 0) = -\epsilon g_i^2<0\;\; \text{for}\;\; g_i\neq 0\
\end{equation*}
which is in contrast with the positive definiteness of $V$. Therefore, $g_i$ cannot be nonzero. As $i$ is chosen arbitrarily from $\{1, 2, \ldots, d\}$, the derivative of $V$ with respect to any of its arguments is zero at the origin, meaning that, $\nabla_\xb V(\xb)|_{\xb=\mathbf{0}}=0$.
\end{proof}

Each iteration of the RoA estimation algorithm expands the level set of the estimated Lyapunov function to sample from the gap $\Gcal=S_{\alpha c}(V)\backslash S_{c}(V)$ surrounding the current estimate of the RoA for a $\alpha>1$. Both too small and too large gaps are harmful to the stability of the growing RoA estimation algorithm. A too small gap results in too few samples and prolongs the number of growth phases. Moreover, if the gap is too small, it is more likely that all initial states taken from the gap either converges to the equilibrium or diverges. Therefore, the dataset for the optimization problem~\labelcref{eq:roa_update_objective} will be highly skewed that slows down the learning process even further. A too large gap is also harmful as it may advance far beyond the true RoA of the system and many sampled initial states can diverge to unknown and potentially dangerous regions of the state space. As a result, investigating the growth rate of the gap $\Gcal$ as a function of the properties of $V$ is desirable to regularize or prevent harmful sampling behaviors. The following theorem sheds light on this matter.

\begin{theorem} 
  \label{growth_rate_of_level_sets}
  Growth rate of sublevel sets: Assume $V:\Xcal\to\RR$ is a positive definite Lipschitz continuous function on $\Xcal\subseteq\RR^d$. Let $\Scal_c(V)=\{\xb\in \Xcal:V(\xb)\leq c\}$ be the region enclosed by the level set $\partial \Scal_c(V) = \{\xb\in \Xcal:V(\xb)= c\}$ at the level value $c$. If $G\leq\lVert \nabla_\xb V(\xb)\rVert$ for $G\in \RR^{>0}$ and $\xb\in \Scal_c(V)$, then $\partial \mu(\Scal_c(V))\backslash \partial c\propto G^{-1}$.
\end{theorem}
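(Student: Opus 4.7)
The plan is to apply the coarea formula to the positive scalar function $V$ and then use the lower bound on $\|\nabla_\xb V\|$ to upper bound the resulting surface integral. Since $V$ is Lipschitz continuous on $\Xcal$, Rademacher's theorem guarantees that $\nabla_\xb V$ exists almost everywhere, which is exactly the regularity needed for the coarea formula to apply.

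First, I would write, for every level $c' \in [0,c]$, the slice representation
\begin{equation*}
\mu(\Scal_c(V)) \;=\; \int_{\Scal_c(V)} \d{\mu}(\xb) \;=\; \int_{0}^{c} \left(\int_{\partial \Scal_{c'}(V)} \frac{1}{\lVert \nabla_\xb V(\xb)\rVert}\, \d{\sigma}(\xb)\right) \d{c'},
\end{equation*}
where $\d{\sigma}$ is the $(d{-}1)$-dimensional surface measure on the level set $\partial \Scal_{c'}(V)=\{\xb: V(\xb)=c'\}$. This is the coarea formula applied to $V$, and it is valid precisely because $V$ is Lipschitz and $\lVert\nabla_\xb V\rVert$ is bounded away from zero on $\Scal_c(V)$ (so the quotient $1/\lVert\nabla_\xb V\rVert$ is well defined a.e. on each regular level set).

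Next, by the Lebesgue differentiation theorem applied to the outer integral, I would differentiate both sides with respect to the upper limit $c$ to obtain
\begin{equation*}
\frac{\partial \mu(\Scal_c(V))}{\partial c} \;=\; \int_{\partial \Scal_c(V)} \frac{1}{\lVert \nabla_\xb V(\xb)\rVert}\, \d{\sigma}(\xb).
\end{equation*}
The assumed uniform lower bound $\lVert\nabla_\xb V(\xb)\rVert \ge G$ on $\Scal_c(V)$ then yields
\begin{equation*}
\frac{\partial \mu(\Scal_c(V))}{\partial c} \;\leq\; \frac{1}{G}\,\sigma(\partial \Scal_c(V)),
\end{equation*}
which is the claimed proportionality $\partial\mu(\Scal_c(V))/\partial c \propto G^{-1}$ with the proportionality factor being the $(d{-}1)$-dimensional surface area of the level set $\partial \Scal_c(V)$.

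The main obstacle is not the algebra but the measure-theoretic justification: invoking the coarea formula requires confirming the a.e.\ differentiability of $V$ (handled by Lipschitz continuity) and ensuring the sublevel sets have sufficiently regular boundaries, i.e.\ $c$ is a regular value of $V$ for a.e.\ $c$ (Sard's theorem for Lipschitz maps gives this for almost every $c$, which is enough to interpret the pointwise derivative). Once these technical points are in place, the geometric picture is transparent: an increment $\d{c}$ displaces each point of $\partial \Scal_c(V)$ outward by a normal distance $\d{c}/\lVert\nabla_\xb V\rVert$, so a larger gradient magnitude means the level sets grow more slowly in volume, while a smaller gradient magnitude causes rapid expansion—exactly the behavior that the RoA-estimation sub-phase must regulate when choosing $\gamma_r$.
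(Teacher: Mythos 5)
Your proof is correct, and it reaches the paper's conclusion by a genuinely different route. The paper argues directly on the thin shell $\Scal_{\alpha c}(V)\backslash\Scal_c(V)$: it perturbs a boundary point $\xb\in\partial\Scal_c(V)$ by $\zb = z\,\nabla_\xb V(\xb)/\lVert\nabla_\xb V(\xb)\rVert$, uses the first-order Taylor expansion $V(\xb+\zb)\approx V(\xb)+z\lVert\nabla_\xb V(\xb)\rVert$ to bound the normal thickness of the shell by $c(\alpha-1)G^{-1}$, and then integrates the differential form $z\,\d{\sbb}$ over $\partial\Scal_c(V)$ to bound the shell's volume, finally differentiating with respect to the multiplicative factor $\alpha$ (so its constant of proportionality is $c\,\mu(\partial\Scal_c(V))$ rather than your $\sigma(\partial\Scal_c(V))$ — the two differ by the chain-rule factor $c$ from $\alpha\mapsto\alpha c$, which does not affect the claimed $G^{-1}$ proportionality). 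You instead invoke the coarea formula to get the exact identity $\partial\mu(\Scal_c(V))/\partial c=\int_{\partial\Scal_c(V)}\lVert\nabla_\xb V\rVert^{-1}\d{\sigma}$ and then apply the lower bound $G$. What your route buys is rigor under the stated hypotheses: the theorem only assumes $V$ is Lipschitz, and Rademacher plus Sard justify the slicing, whereas the paper's Taylor expansion with an $O(z^2)$ remainder tacitly assumes more smoothness along the level set. What the paper's route buys is an elementary, self-contained geometric picture (the shell as a normal displacement of the boundary) that needs no measure-theoretic machinery. Both proofs share the same minor looseness: each establishes an upper bound of the form $G^{-1}\times(\text{surface area})$ and then reads off ``$\propto G^{-1}$,'' which is an inequality rather than a true proportionality; your exact coarea identity before applying the bound is actually the cleaner statement of what is really being proved.
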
  
  
\begin{proof}
  Let $\zb = z \nabla_{\xb} V(\xb)/ \lVert \nabla_{\xb} V(\xb) \rVert$ be a tiny perturbation in the direction of the normal to the level set. $V$ is expanded around $\xb\in \partial \Scal_c(V)$ as
  \begin{align*}
    V(\xb+\zb)  &= V(\xb) + \nabla_\xb V(\xb)\tran \zb + O(z^2)\\
                &= V(\xb) + z \lVert \nabla_\xb V(\xb) \rVert + O(z^2)
  \end{align*}
  where $O(z^2)$ can be ignored for sufficiently small $z = \lVert \zb\rVert$. 
  For $\alpha\in \RR^{>1}$ and sufficiently close to $1$,
  \begin{align*}
    \Scal_{\alpha c}(V) &= \{\xb + \zb: \xb\in \partial\Scal_c(V)\; \text{and}\; V(\xb) + \nabla_\xb V(\xb)\tran \zb \leq \alpha c\}\\
                    &= \{\xb + \zb: \xb\in \partial\Scal_c(V)\; \text{and}\; z\lVert \nabla_\xb V(\xb) \rVert \leq (\alpha-1) c\}.
  \end{align*}
As $G$ is assumed to be a lower bound of $\lVert \nabla_\xb V(\xb)\rVert$, $z\lVert \nabla_\xb V(\xb)\rVert\leq (\alpha-1) c$ implies $z \leq c(\alpha-1) \lVert \nabla_\xb V(\xb)\rVert^{-1} \leq c(\alpha-1)G^{-1}$.

Notice that $\partial \Scal_c(V)$ is a $(d-1)-$dimensional surface that encloses $\Scal_c(V)$ an $d-$dimensional volume that are both embedded in a $d-$dimensional embedding space $\Xcal$.
  We are interested in the volume of $\Gcal:=\Scal_{\alpha c}(V) \backslash \Scal_c(V)$. Assume $\d\omegab$ is the differential form for $\Gcal$. We can write $\d\omegab = \d \sbb \lVert \zb \rVert= z \d \sbb$ where $\d \sbb$ is the surface differential form for $\partial \Scal_c(V)$. Hence,
\begin{equation}
  \mu(\Gcal) = \d\mu(\Scal_c(V)) = \int_\Gcal \d\omegab=\int_{\partial\Scal_c(V)} z \d \sbb \leq c(\alpha -1)G^{-1}\int_{\partial \Scal_c(V)} \d \sbb.
\end{equation}
where $\int_{\partial \Scal_c(V)} \d \sbb$ does not depend on $\alpha$ or $G$. Hence, by pushing $\alpha\to 0$, $\partial \mu(\Scal_c(V)) \backslash \partial \alpha = c \mu(\partial \Scal_c(V)) G^{-1} \propto G^{-1}$ that completes the proof.
\end{proof}

As a result of this theorem, in some applications, one may need to control $\lVert\nabla_\xb V(\xb)\rVert$ for $\xb\in \partial \Scal_c(V)$ to prevent sampling from a too large or too small gap.

\section{Theoretical Motivation of $[V(\xb) - V_{\pi_{n-1}}(f_\pi(\xb))]^2$ in~\Cref{eq:roa_update_objective}}
\label{sec:improved_roa_estimation_constructive_term}
In this section we discuss the theory behind the term $[V(\xb) - V_{\pi_{n-1}}(f_\pi(\xb))]^2$ in~\Cref{eq:roa_update_objective} that we added as an improvement to~\citep{richards2018lyapunov} to facilitate learning the RoA. The objective function for the RoA estimation sub-phase is restated here:

{\small
\begin{equation*}
  \Lcal(V) = \sum_{\xb\in\XX^\text{IN}}[V(\xb) - \bar{c}] - \sum_{\xb\in\XX^\text{OUT}}[V(\xb) - \bar{c}] +\lambda_\text{RoA}\sum_{\xb\in\XX^\text{IN}} \Delta  V(\xb) + \lambda_\text{monot}\sum_{\xb\in\XX^\text{IN}} [V(\xb) - V_{\pi_{n-1}}(f_{\pi_{n-1}}(\xb))]^2.
\end{equation*}
}

The first two terms construct a classifier objective. The third term is added to conform with the Lyapunov decrease conditions. Here, we focus on the last term, i.e., $[V(\xb) - V_{\pi_{n-1}}(f_\pi(\xb))]^2$. The motivation behind adding this term comes from the constructive method proposed by~\cite{chiang1989stability}.

The idea of the constructive methodology of~\cite{chiang1989stability} is to construct a sequence of functions $V_0, V_1, \ldots$ for the autonomous dynamical system $\xb_{k+1}=f(\xb_k)$ in order to use the level sets of the accumulating function of this sequence for estimating the RoA of the vector field $f$. The theory is developed for a class of functions more general than Lyapunov functions that are called energy-like functions. An energy-like function decreases over the trajectories of the system (see~\cite{chiang2015stability} for a precise definition). Given an energy-like function $V_0$ for the vector field $f$, the following sequence of functions is constructed
\begin{align}
  V_1(\xb) &= V_0(\xb + \epsilon_1 f(\xb))\label{eq:constructive_process}\\
  V_2(\xb) &= V_1(\xb + \epsilon_2 f(\xb))\nonumber\\
  &\ldots\nonumber\\
  V_n(\xb) &= V_{n-1}(\xb + \epsilon_n f(\xb))\nonumber
\end{align}

where $d_i, i=1,2,\cdots,n$ are positive numbers. Two facts about this sequence must be proved: {\bf 1)} All functions produced in this sequence are energy-liked functions. {\bf 2)} For a fixed positive $c$, $S_c(V_i)\subset S_c(V_{i+1})$. The following two theorems guarantee these points.

\begin{theorem}[Energy-like functions, lemma $4.2$ in~\cite{chiang1989stability}]
  \label{thm:energy_like_functions}
  Let $V:\RR^d\to \RR$ be an energy-like function for the nonlinear autonomous system $\xb_{k+1}=f(\xb_k)$ with the equilibrium point $\bar{\xb}=\mathbf{0}$. Let $\Dcal$ be a compact set around $\bar{\xb}$ that contains no other equilibrium points. Then, there exists an $\tilde{\epsilon}>0$ such that for $\epsilon<\hat{\epsilon}$, the function $V_1 = V(\xb + \epsilon f(\xb))$ is also an energy-like function on the compact set $\Dcal$ for the vector field $f$.
\end{theorem}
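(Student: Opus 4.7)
By definition, verifying that $V_{1}(\xb)=V(\xb+\epsilon f(\xb))$ is energy-like for $f$ on $\Dcal$ amounts to showing that the one-step difference
\begin{equation*}
\Psi(\xb,\epsilon)\;:=\;V_{1}(f(\xb))-V_{1}(\xb)\;=\;V\bigl(f(\xb)+\epsilon f(f(\xb))\bigr)-V\bigl(\xb+\epsilon f(\xb)\bigr)
\end{equation*}
is strictly negative for every $\xb\in\Dcal\setminus\{\bar{\xb}\}$ once $\epsilon$ is small enough. The strategy is a two-region argument: use continuity and compactness on a closed shell $\Dcal\cap\{\lVert\xb-\bar{\xb}\rVert\geq \rho\}$ bounded away from $\bar{\xb}$, and a local quantitative argument on $\Dcal\cap\{\lVert\xb-\bar{\xb}\rVert< \rho\}$. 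The baseline is that $\Psi(\xb,0)=\Delta V(\xb)$ is already strictly negative on $\Dcal\setminus\{\bar{\xb}\}$ because $V$ is assumed energy-like for $f$.

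\textbf{Outer region.} A Taylor expansion of $\Psi$ in $\epsilon$ (using smoothness of $V$ and $f$ together with compactness of $\Dcal$) gives $\Psi(\xb,\epsilon)=\Delta V(\xb)+\epsilon\,\delta(\xb)+R(\xb,\epsilon)$ with
\begin{equation*}
\delta(\xb)=\nabla V(f(\xb))^{\top}f(f(\xb))-\nabla V(\xb)^{\top}f(\xb),
\end{equation*}
and $R(\xb,\epsilon)=O(\epsilon^{2})$ uniformly on $\Dcal\times[0,\epsilon_{0}]$. Fix $\rho>0$. On the compact annulus $A_{\rho}=\Dcal\cap\{\lVert\xb-\bar{\xb}\rVert\geq \rho\}$, the continuous function $\Delta V$ attains a strictly negative maximum $-m_{\rho}<0$ while $|\delta|$ and $|R|/\epsilon^{2}$ are uniformly bounded by some constant $K_{\rho}$ (the set contains no equilibrium point other than $\bar{\xb}$, and $A_{\rho}$ excludes $\bar{\xb}$). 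Choosing $\epsilon$ below a threshold $\epsilon_{1}(\rho)$ depending only on $m_{\rho}$ and $K_{\rho}$ then forces $\Psi(\xb,\epsilon)\leq -m_{\rho}/2<0$ on $A_{\rho}$.

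\textbf{Inner region and main obstacle.} The hard part is that $\epsilon_{1}(\rho)\to 0$ as $\rho\to 0$, so the outer argument alone does not yield a single $\tilde{\epsilon}$ that works throughout the punctured domain $\Dcal\setminus\{\bar{\xb}\}$. For the ball $B_{\rho}(\bar{\xb})\cap\Dcal$ I would invoke \Cref{lem:dv_dx_at_equilibrium} to obtain $\nabla V(\bar{\xb})=0$, which forces $\delta(\bar{\xb})=0$ and hence $\delta(\xb)\to 0$ as $\xb\to\bar{\xb}$. Combined with the hypothesis that $\bar{\xb}$ is an asymptotically stable (hyperbolic) equilibrium, so $J_{f}(\bar{\xb})$ has spectral radius strictly less than one, and with $V\in C^{2}$ positive definite with positive definite Hessian at $\bar{\xb}$, both $\Delta V$ and $\delta$ admit matching quadratic bounds $\Delta V(\xb)\leq -c_{1}\lVert\xb-\bar{\xb}\rVert^{2}$ and $|\delta(\xb)|\leq c_{2}\lVert\xb-\bar{\xb}\rVert^{2}$ on a sufficiently small $B_{\rho}$. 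Matching orders yields
\begin{equation*}
\Psi(\xb,\epsilon)\;\leq\;-(c_{1}-\epsilon c_{2})\lVert\xb-\bar{\xb}\rVert^{2}+O\bigl(\epsilon^{2}\lVert\xb-\bar{\xb}\rVert^{2}\bigr),
\end{equation*}
which is strictly negative once $\epsilon<\epsilon_{2}$, for some $\epsilon_{2}$ depending only on $c_{1},c_{2}$. Setting $\tilde{\epsilon}=\min(\epsilon_{1}(\rho),\epsilon_{2})$ for the fixed $\rho$ closes the argument. The main obstacle is exactly this scale-matching: without the $C^{2}$ regularity and hyperbolicity, $\delta$ could a priori decay slower than $\Delta V$ near the equilibrium and no single $\tilde{\epsilon}$ would suffice, so these regularity assumptions, rather than the bare energy-like decrease, are what carry the proof through the inner region.
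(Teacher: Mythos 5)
First, for context: the paper does not prove this statement at all — it is imported as Lemma 4.2 of \cite{chiang1989stability} and used as a black box — so your attempt can only be judged on its own terms. Your two-region strategy is the natural one, and the outer region is handled correctly: on the compact annulus $A_\rho$ the strict decrease $\Delta V<0$ is bounded away from zero by compactness, $\delta$ and the remainder are uniformly bounded, and a single $\epsilon_1(\rho)$ suffices. You are also right that the inner region is where the real work is.

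The gap is in the inner region. The bound $|\delta(\xb)|\leq c_2\lVert\xb\rVert^2$ does follow from $V\in C^2$, $f\in C^1$ and $f(\mathbf{0})=\mathbf{0}$, but the companion bound $\Delta V(\xb)\leq -c_1\lVert\xb\rVert^2$ does not follow from the hypotheses you invoke (energy-like decrease, hyperbolicity, positive definite Hessian $H$ of $V$ at the origin). Writing $A=J_f(\mathbf{0})$, one has $\Delta V(\xb)=\frac{1}{2}\xb^{\top}(A^{\top}HA-H)\xb+o(\lVert\xb\rVert^2)$, and strict negativity of $\Delta V$ on a punctured neighborhood only forces $A^{\top}HA-H\preceq 0$, not $\prec 0$: the quadratic part may be degenerate along some direction, with the strict decrease there supplied by higher-order terms (think $\Delta V\sim -x_1^2-x_2^4$). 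Along such a direction $-\Delta V$ decays like $\lVert\xb\rVert^4$ while $\epsilon\,\delta$ is genuinely of order $\epsilon\lVert\xb\rVert^2$, so your scale-matching fails for every fixed $\epsilon>0$. Hyperbolicity guarantees that \emph{some} positive definite $H$ satisfies $A^{\top}HA\prec H$, not that the Hessian of the given $V$ does; to close the argument you must either assume this strict linearized decrease for $V$ itself or extract it from the precise definition of an energy function in \cite{chiang1989stability}, which your proof does not do. Relatedly, you are importing hypotheses ($C^2$ regularity, positive definite Hessian, hyperbolicity) that are absent from the statement as written; that is probably unavoidable — the statement is not self-contained — but it should be declared up front rather than folded in mid-argument.
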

This theorem guarantees that all functions in the constructive process~\labelcref{eq:constructive_process} are energy-like functions.

\begin{theorem}[Monotonic sublevel sets, lemma $4.1$ in~\cite{chiang1989stability}]
  \label{thm:monotonic_levelsets}
  Let $V:\RR^d\to \RR$ be an energy-like function for the nonlinear autonomous system $\xb_{k+1}=f(\xb_k)$ with the equilibrium point $\bar{x}=\mathbf{0}$. Let $\Dcal$ be a compact set around $\bar{\xb}$ that contains no other equilibrium points. Assume the set $S_c(V):=\{\xb: V(\xb)\leq c\text{ and }\xb\in\Dcal\}$ is non-empty for some constant $c$. Then there exists an $\tilde{\epsilon}>0$ such that for the set characterized by $S_c(V_1):=\{\xb: V_1(\xb)\leq c\text{ and }\xb\in\Dcal\}$ where $V_1(\xb)=V(\xb+\epsilon f(\xb))$ and $\epsilon<\tilde{\epsilon}$, the following holds
  \begin{equation}
    S_c(V)\subset S_c(V_1).
  \end{equation}
\end{theorem}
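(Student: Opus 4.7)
The plan is to reduce the set inclusion to a pointwise statement: for each $\xb\in S_c(V)$, show that $V(\xb+\epsilon f(\xb))\le c$ whenever $\epsilon$ is below a threshold that does not depend on $\xb$. The natural starting point is the first-order Taylor expansion
\[
V_1(\xb)=V(\xb)+\epsilon\,\nabla V(\xb)^\top f(\xb)+R(\xb,\epsilon),
\]
where $|R(\xb,\epsilon)|\le M\epsilon^2$ uniformly on $\Dcal$, with $M$ depending only on the $C^2$-norm of $V$ and a bound on $\|f\|$ over the compact set $\Dcal$. The energy-like property of $V$ supplies the sign information $g(\xb):=\nabla V(\xb)^\top f(\xb)\le 0$ throughout $\Dcal$, with strict inequality at every regular point.

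The main obstacle is that $g(\bar{\xb})=0$, so no single uniform lower bound $|g|\ge\alpha>0$ can hold over all of $S_c(V)$; arbitrarily close to the equilibrium the quadratic remainder swamps the linear term. To circumvent this I would split $S_c(V)$ in two. First, pick a radius $r>0$ small enough that $V(\xb)<c/2$ on $B_r(\bar{\xb})$, which is possible because $V(\bar{\xb})=0$ and $V$ is continuous. On $B_r(\bar{\xb})\cap S_c(V)$, joint uniform continuity of $(\xb,\epsilon)\mapsto V(\xb+\epsilon f(\xb))$ on the compact set $\overline{B_r(\bar{\xb})}\times[0,1]$ yields $|V_1(\xb)-V(\xb)|<c/4$ once $\epsilon$ is small enough, hence $V_1(\xb)<c$. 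Second, on the complement $K:=S_c(V)\setminus B_r(\bar{\xb})$, the set $K$ is compact and contains no equilibria by the hypothesis on $\Dcal$, so $g$ attains a strict maximum $-\alpha<0$ there; for $\epsilon<\alpha/M$ the Taylor expansion gives
\[
V_1(\xb)\le V(\xb)-\epsilon\alpha+M\epsilon^2<V(\xb)\le c.
\]
Taking $\tilde\epsilon$ to be the smaller of the two thresholds establishes $S_c(V)\subseteq S_c(V_1)$.

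To upgrade the inclusion to the strict containment $S_c(V)\subsetneq S_c(V_1)$, I would exhibit a point of $S_c(V_1)$ lying outside $S_c(V)$: pick any $\xb_0$ with $V(\xb_0)=c$, which necessarily lies in $K$, so the estimate above yields $V_1(\xb_0)<c$ strictly. Continuity of $V$ and $V_1$ then produces an open neighborhood of $\xb_0$ on which $V_1<c$ while $V>c$, providing the required witnesses. The argument thus rests on Taylor expansion combined with compactness-based uniformity; the delicate step, and the one where the energy-like assumption is essential, is the derivative inequality $g<0$ off the equilibrium, while the small-ball split around $\bar{\xb}$ is the device that keeps $\tilde\epsilon$ independent of $\xb$.
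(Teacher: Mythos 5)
The paper does not prove this statement: it is imported as Lemma~4.1 of Chiang and Thorp (1989) and used as a black box, so there is no in-paper proof to compare against. Your argument is essentially the standard proof of this lemma and its core is sound: the combination of a first-order expansion with compactness, and in particular the split of $S_c(V)$ into a small ball around $\bar{\xb}$ (handled by uniform continuity of $(\xb,\epsilon)\mapsto V(\xb+\epsilon f(\xb))$) and the compact remainder $K$ (where $\nabla V^\top f\le-\alpha<0$), is exactly the right device for making $\tilde\epsilon$ independent of $\xb$. Three caveats. First, you silently adopt the continuous-time reading of ``energy-like,'' namely $\nabla V(\xb)^\top f(\xb)<0$ at regular points; this is the correct reading for Chiang's construction, where $\xb+\epsilon f(\xb)$ is an Euler step along a vector field, but it does not follow from the discrete-time phrasing ``$V$ decreases along trajectories of $\xb_{k+1}=f(\xb_k)$,'' i.e.\ $V(f(\xb))<V(\xb)$, under which the theorem as transcribed here would not even be well-posed --- worth flagging explicitly. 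Second, your remainder bound needs $V\in C^2$ (or a $C^1$ integral-form variant using a modulus of continuity of $\nabla V$, which works equally well), and your choice of $r$ uses $V(\bar{\xb})=0$; neither hypothesis appears in the literal statement, though both are standard for energy functions and harmless. Third, the strictness of the inclusion is the one genuinely incomplete step: a point $\xb_0\in\Dcal$ with $V(\xb_0)=c$ need not exist (if $V<c$ on all of $\Dcal$ then $S_c(V)=\Dcal$ and proper containment can fail outright), and even when it does, your witnesses with $V>c$ and $V_1<c$ must also lie in $\Dcal$, which requires $\xb_0$ to be interior. These are edge cases of the lemma's loose statement rather than errors in your reasoning; for the non-strict inclusion $S_c(V)\subseteq S_c(V_1)$, which is all the paper actually needs to motivate the monotonicity term in the loss, your proof is complete.
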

This theorem guarantees that the sequence built by the constructive process~\labelcref{eq:constructive_process} gives a monotonically increasing sequence of sublevel sets.

Now, we get back to the addded term to the objective function, i.e., $[V(\xb) - V_{\pi_{n-1}}(f_{\pi_{n-1}}(\xb))]^2$. Notice that $V_{\pi_{n-1}}(\cdot)$ is fixed and the minimization is performed with respect to $V(\cdot)$. Hence, minimizing the above term at each phase of the algorithm emulates the above constructive process and encourages the learned Lyapunov functions to be monotonic in the sense that their sublevel sets $\Scal_c(V_n)$ tend to become a monotonically increasing sequence that covers more and more space of the true RoA.

\section{Algorithms}
\label{sec:algorithms}
The proposed algorithms in this work are verbally described in~\Cref{sec:algorithm}. To facilitate implementation, the detailed pseudo-code of the algorithms come here. The RoA estimation sub-phase is realized as~\Cref{alg:roa_estimation} and the policy update sub-phase is realized as~\Cref{alg:policy_update}.
\RestyleAlgo{boxed}

\begin{algorithm}[h!] 
  \caption{RoA estimation: Learn $\Rcal_{\pi_{n}}$ from $(\Rcal_{\pi_{n-1}}, f_{\pi_{n}})$}
  \label{alg:roa_estimation}
  \SetKwInOut{Input}{input}\SetKwInOut{Output}{output}
  \Input{$(V_{\pi_{n-1}}, c_{n-1})$: The Lyapunov function and level value of phase $n$ where | $\Rcal_{\pi_{n-1}}=\Scal_{c_{n-1}}(V_{\pi_{n-1}})$ | $f_{\pi_{n}}$: Closed-loop system vector field  | $\gamma_r>1$: Level value multiplicative factor | $N\in \NN$: Number of sampled states | $M\in\NN$: Number of phases | $0\leq\beta_r\leq 1$: Mixture parameter | $L_r\in \NN$: Trajectory length  | $\Dcal$: Domain | $\lambda_{\rm RoA}$: Negative definiteness weighting factor | $\lambda_{\rm monot}$: Monotonicity weighting factor}

  \Output{$(V_{\pi_n}, c_n)$}
  Init $\hat{V}$ to $V_{\pi_{n-1}}$ and $\hat{c}$ to $c_{n-1}$\\
  Init the sampling distribution $p_r$ to $U(\Dcal)$, i.e., uniform distribution over the domain $\Dcal$\\
  \For{$m=1,\ldots,M$}{
        $\Gcal \leftarrow \Scal_{\gamma_r \hat{c}}(\hat{V})\backslash \Scal_{\hat{c}}(\hat{V})$\\
        $p_r\leftarrow \beta_r U(\Gcal) + (1-\beta_r) U(\Dcal)$\\
        $\XX_0\leftarrow$ Generate $N$ samples from $p_r$\\
        $\XX_{L_r}\leftarrow$ Run $f_{\pi_{n}}$ on $\XX_0$ for $L_r$ steps\\
        $\XX_0^\text{IN}\leftarrow \{(\xb, 1): \xb\in\XX_0, \Phi(\xb,L_r)\in  \Scal_{\hat{c}}(\hat{V})\}$\\
        $\XX_0^\text{OUT}\leftarrow \{(\xb, 0): \xb\in\XX_0, \Phi(\xb,L_r)\notin  \Scal_{\hat{c}}(\hat{V})\}$\\
        $V^*\leftarrow$ Optimize for $V$ in the objective function~\labelcref{eq:roa_update_objective} using the dataset $\{\XX_0^\text{IN}, \XX_0^\text{OUT}\}$\\
        $\hat{c}\leftarrow\argmax_c \{c\in\RR: \Delta f_{\pi_{n-1}}(\xb) < 0\;{\rm for}\;\xb\in\Scal_c(V^*)\}$\\
        $\hat{V}\leftarrow V^*$
    }
$(V_{\pi_n}, c_n)\leftarrow (\hat{V}, \hat{c})$
\end{algorithm}

\begin{algorithm}[t!] 
  \caption{Policy update: Learn $\pi_{n+1}$ from $(\Rcal_{\pi_n}, f_{\pi_{n}})$}
  \label{alg:policy_update}
  \SetKwInOut{Input}{input}\SetKwInOut{Output}{output}
  \Input{$(V_{\pi_n}, c_n)$: The Lyapunov function and level value of phase $n$ where $\Rcal_{\pi_n}=\Scal_{c_n}(V_{\pi_n})$ | Closed-loop system vector field $f_{\pi_n}$ | $\gamma_p>1$: Level value multiplicative factor | $N\in \NN$: Number of sampled states | $0\leq\beta_p\leq 1$: Mixture parameter  | $L_p\in \NN$: Trajectory length | $\lambda_u$: Unstable states weighting factor}
  \Output{$\pi_{n+1}$}
  Init sampling distribution $p_p$ to $U(\Scal_{c_n}(V_{\pi_n}))$, i.e., uniform distribution over $\Scal_{c_n}(V_{\pi_n})$\\
  $\Gcal \leftarrow \Scal_{\gamma_p c_n}(V_{\pi_n})\backslash\Scal_{c_n}(V_{\pi_n})$\\
  $p_p\leftarrow \beta_p U(\Gcal) + (1-\beta_p) U(\Scal_{c_n}(V_{\pi_n}))$\\
  $\XX_0\leftarrow$ Generate $N$ samples from $p_p$\\
  $\pi_{n+1}\leftarrow$ Optimize for $\pi$ in the objective function~\labelcref{eq:policy_update_objective} using the dataset $\{\XX_0\}$
\end{algorithm}

\section{Model-Based Assumption}
\label{sec:model_based_assumtion}
The model of the system is used to produce the trajectories of the system which is needed in evaluating both objective functions in~\labelcref{eq:policy_update_objective} and~\labelcref{eq:roa_update_objective}. However, the information needed to compute~\labelcref{eq:roa_update_objective} can also be obtained by experimentation with the physical system without requiring the dynamics function of the system. In the following, we show that the knowledge of the model of the system can be relaxed in both sub-phases of the algorithm.
\subsection{RoA Estimation Sub-Phase}
\label{sec:model_based_assumption_in_roa_estimation_phase}
As can be seen in~\Cref{eq:roa_update_objective} and~\Cref{alg:roa_estimation}, in this sub-phase, the model is used to produce trajectories starting from the sampled initial states from the gap around the current estimate of the RoA or from withing the RoA. In both cases, as also suggested by~\cite{richards2018lyapunov}, instead of the model, the real system can be used to produce the trajectories. Assume $\gamma_r$ in~\Cref{alg:roa_estimation} satisfies the conditions of~\Cref{growth_rate_of_level_sets} of the Appendix. Therefore, the gap from which the initial states are sampled is not too large and a hyperbolic physical system does not exhibit a drastically different and potentially dangerous behavior when it is launched from initial states sampled from this surrounding gap. Hence, in an approach similar to active learning, the real system can be used to produce the trajectories and label them based on whether they enter the current target level set of the estimated Lyapunov function or not. In conclusion, in this sub-phase, there will be no need to know the model of the system or estimate it.

\subsection{Policy Update Sub-Phase}
\label{sec:model_based_assumption_in_policy_estimation_phase}
The model requirement is a bit different in this sub-phase compared with the RoA estimation sub-phase. As we need to update the policy, information on the way the behavior of the system changes with respect to a change in the policy is required. However, looking at~\labelcref{eq:policy_update_objective}, it is observed that the behavior of the system influences the objective function only via the Lyapunov function $V(\Phi(\xb,L_p))$. Suppose the policy is parameterized as $\pi(\xb;\phi)$. Then, the closed-loop system becomes $\xb_{k+1}=f(\xb_k, \pi(\xb_k))$. The required gradient to update the policy parameters contains the term $\partial V(\xb_k) / \partial \psi$ where the information of $\psi$ is encoded in $\xb_k=\Phi(\xb_0, k)$. The derivative decomposes as
\begin{equation}
  \frac{\partial V(\xb_k)}{\partial \psi} = \frac{\partial V(\xb_k)}{\partial \xb_k}\frac{\partial \xb_k}{\partial \psi}.
\end{equation}
where the knowledge of the model is required to compute $\partial \xb_k / \partial \psi$ that will be a matrix of size $\dim(\Xcal)\times \dim(\psi)$. When the model is not given, one must instead try to estimate the model of the system $\hat{f}:\RR^d\to\RR^d$. However, this vector-valued function is difficult to estimate unless in very limited cases. Instead, one can try to directly estimate $V$ rather than $f$ as a function of $\psi$. As $V$ is a scalar-valued function, it takes fewer trajectories to obtain a decent estimate of $\partial V / \partial \psi$. Hence, even though the model of the system is needed for this sub-phase, there are two factors that relaxes this requirement: {\bf 1)} As the trajectories are integrated forward only for $L_p$ steps in~\Cref{alg:policy_update}, a local estimation would be sufficient. {\bf 2)} Even in the local estimatation regime, one does not need to estimate the nonlinear vector field as an $\RR^d$ to $\RR^d$ function. What matters is how the vector field looks like through the lens of the Lyapunov function that is a scalar-valued function. Hence, an $\RR^d\to \RR^d$ estimation problem can be replaced by an $\RR^d\to\RR$ estimation problem.

\section{Weak Learning Signal}
\label{sec:weak_learning_signal}
In this section, we take a closer look at the occasions that the learning signal for the policy update sub-phase of the algorithm (\Cref{sec:algorithm}) is weak. We discuss the problem for a general setting and show that the setting of this paper is a special case.
\paragraph{Problem Statement.} Assume the discrete-time time-invariant non-autonomous dynamical system $\xb_{k+1}=f(\xb_k, \ub_k)$. The goal is to design $\ub_k$ for $0\leq k\leq T$ such that $\xb_k$ meets some specified conditions for $0\leq k\leq T$. In the Lyapunov stability analysis, these conditions are assessed by a function $V:\Xcal\to \RR^{\geq0}$, i.e., after rolling out the starting state $\xb_0$ for $T$ steps by the dynamics $f$ controlled by $\ub_k$, $V(\xb_{-T})\in \Scal$ where $\Scal$ encapsulates the desired conditions for the trajectory $\xb_{-T}=(\xb_0, \xb_1, \ldots, \xb_T)$ of the system. Assume the deviation of $V(\xb_{-T})$ from its desired set $\Scal$ is measured by a deviation metric $\Lcal(V(\xb_{-T}), \Scal)$. In many control tasks such as \emph{tracking}, the entire trajectory matters, and $\Lcal$ will be a function of every state in the trajectory. However, in control tasks such as \emph{reaching}, only the final state $\xb_T$ matters; consequently the objective function $\Lcal$ only depends on $\xb_T$. Stability, in the presence of a Lyapunov function, can be seen as an example of the second class of tasks where the relative position of $\xb_T$ compared to the level sets of the Lyapunov function is sufficient to decide the convergence of that trajectory. In practice, the control signal $\ub_k$ is produced as a parametric function, i.e. $\ub_k = \pi(\xb_k;\psi)$. Therefore, the entire trajectory $\xb_{-T}(\psi)$ is now parameterised by $\psi$ and so is the loss function $\Lcal(\psi)$. The class of algorithms known as \emph{policy gradient} uses $\partial \Lcal(\psi) / \partial \psi$ to learn the controller $\pi(\cdot;\psi)$. In this section, we study the condition of this gradient and show under which circumstances it vanishes and results in slow convergence.

In a general case where $V$ in $\Lcal(V(\xb_{-T}), \Scal)$ is a function of the entire trajectory $\xb_{-T}$, the gradient w.r.t. the controller parameters is written as:
\begin{align}
  \frac { \partial \Lcal } { \partial \psi } &= \sum _ { 1 \leq p \leq T } \frac { \partial \Lcal _ { p } } { \partial \psi }\label{eq:total_loss_derivative}\\
  \frac { \partial \Lcal _ { k } } { \partial \psi } &= \sum _ { 1 \leq p \leq k } \left( \frac { \partial \Lcal} { \partial \mathbf { x } _ { k } } \frac { \partial \mathbf { x } _ { k } } { \partial \mathbf { x } _ { p } } \frac { \partial ^ { + } \mathbf { x } _ { k } } { \partial \psi } \right)\label{eq:pertime_loss_derivative}\\
  \frac { \partial \mathbf { x } _ { k } } { \partial \mathbf { x } _ { p } } &= \prod _ { p < i \leq k } \frac { \partial \mathbf { x } _ { i } } { \partial \mathbf { x } _ { i - 1 } }
  = \prod _ { p < i \leq k } \left(\frac {\partial f} {\partial \xb}|_{\xb = \xb_{i-1}} + \frac {\partial f} {\partial \ub}|_{\ub = \pi(\xb_{i-1})}\frac{\partial \pi(\xb)}{\partial \xb}|_{\xb=\xb_{i-1}}\right)\label{eq:dxk_dxp}
\end{align}
The gradients are derived as sum-of-products and $\frac { \partial ^ { + } \mathbf { x } _ { k } } { \partial \psi }$ refers to the \emph{immediate} partial derivate of state $\xb_k$ with respect to $\psi$ when $\xb_{k-1}$ is considered as a constant.

In the special case where $V$ is the Lyapunov function in $\Lcal(V(\xb_{-T}), \Scal)$ and the concern is the stability of the system, $V(\xb_{-T})=V(\xb_T)$, meaning that the sum on the r.h.s of~\labelcref{eq:total_loss_derivative} will only have one term $\partial \Lcal_T/ \partial \psi$ and~\labelcref{eq:pertime_loss_derivative} will transform to
\begin{equation}
  \frac { \partial \Lcal _ { T } } { \partial \psi } = \frac { \partial \Lcal  } { \partial \mathbf { x } _ { T } } \sum _ { 1 \leq k \leq T } \left(  \frac { \partial \mathbf { x } _ { T } } { \partial \mathbf { x } _ { k } } \frac { \partial ^ { + } \mathbf { x } _ { k } } { \partial \psi } \right).\label{eq:finaltime_loss_derivative}
\end{equation}
In the following, we analyse the role of each term that contributes to this gradient.

\subsection{The Component $\frac { \partial \Lcal  } { \partial \mathbf { x } _ { T } }$} 

This term concerns the differential condition of $\Lcal(\xb)=\Lcal(V(\xb), \Scal)$ at $\xb=\xb_T$ as
\begin{equation}\label{eq:dl_dxT}
  \frac{ \partial \Lcal(\xb) }{\partial \xb}|_{\xb = \xb_T} = \frac{\partial \Lcal(V)}{\partial V}|_{V=V(\xb_T)} \frac{\partial V(\xb)}{\partial \xb}|_{\xb = \xb_T}
\end{equation}
If any of these terms gets too smalll, the overall gradient gets small too resulting in a vanishing gradient issue. To avoid this, the algorithm must ensure that these terms remain sufficiently large. For the first term $\partial \Lcal(V) / \partial V$, this condition is normally fulfilled if the deviation metric $\Lcal(\cdot, \cdot)$ is designed properly. For example, one cadidate function for $\Lcal$ would be a signed distance function that measures how far $\xb_T$ is from the maximal stable sublevel set $\partial S_{\cmax}(V)$, e.g., $\Lcal(V(\xb_T), S_{\cmax}(V))= V(\xb_T) - \cmax$. Another candidate deviation metric could be $\Lcal(V(\xb_T), S_{\cmax}(V))= \max((V(\xb_T) - \cmax), 0)$ that does not care about the actual value of $V(\xb_T)$ as long as $\xb_T$ lives within the sublevel set $S_{\cmax}(V)$. For the former case, $\partial \Lcal / \partial V = V$ meaning that the gradient does not vanish as long as $\xb_T\neq \bar{\xb}$.

The second term of the r.h.s of~\labelcref{eq:dl_dxT} depends on the slope of the function $V$ evaluated at the final state of the trajectory. If $V$ is a candidate Lyapunov function, $\nabla_\xb V(\xb)$ vanishes at the equilibrium as we proved in~\Cref{lem:dv_dx_at_equilibrium}. Therefore, if $\xb_T$ is too close to the equilibrium (i.e., the system converges), the learning signal to update the policy will vanish. This point is formalized in the following remark.

\begin{remark}\label{rem:dv_dxT_for_stable_and_unstable_trajectories}
Consider the closed-loop system $\xb_{k+1}=f(\xb_k, \pi(\xb_k))$. Let $\xb_0$ be the initial state of a trajectory that starts from within the RoA of the equilibrium point $\bar{\xb}$ of the closed-loop system denoted by $\Rcal^{\bar{\xb}}_\pi$. Let $V$ be the Lyapunov function and the objective function $\Lcal(V_\pi(\xb_{-T}), \Scal)$ is optimized to update the policy $\pi$. One must be careful not to let the trajectories roll out for too long ($T\gg 1$). As $\xb_T$ gets too close to the equilibrium, the information of the trajectory degenerates (see~\Cref{thm:dv_dxT_for_stable_trajectories}) and the gradient to update the controller vanishes (see~\Cref{lem:dv_dx_at_equilibrium}). On the other hand, If $\xb_0$ does not belong to the RoA of $\bar{\xb}$, it can go too far from $\Rcal^{\bar{\xb}}_\pi$ and may escape the validity domain of the Lyapunov function $V$. Hence, in either case, the length $T$ of the trajectory influences the information content of the trajectory for updating the policy. Both too large and too small values of $T$ must be avoided when the trajectories pass through the Lyapunov function and the Lyapunov function acts as a critic in the algorithm for learning the policy. Too small values of $T$ are non-informative due to the continuity of $V$. Too large values of $T$, on the other hand, is non-informative as the trajectory enters the flat regions of $V$ or escape the domain of validity of $V$. 
\end{remark}

\begin{theorem}\label{thm:dv_dxT_for_stable_trajectories}
  Consider the dynamical system $\xb_{k+1}=f(\xb_k, \ub_k)$ where the control signal is issued by the policy function $\ub_k = \pi(\xb_k;\psi)$ that is parameterised by $\psi$. Let $\bar{\xb}=\mathbf{0}$ be an asymptotic equilibrium point of this system. If $\xb_0\in\Rcal_\pi^\mathbf{0}$ and $V:\Xcal\to\RR^{0\geq}$ is a $C^r$ Lyapunov function with $r\geq 1$, then
  \begin{equation}
    \forall \epsilon > 0, \exists T>0\;\; \text{such that}\;\; \lVert\frac{\partial V(\xb)}{\partial \psi}|_{\xb=\xb_T}\rVert < \epsilon
  \end{equation}
  \begin{proof}
    It can be seen in~\labelcref{eq:finaltime_loss_derivative} that $\nabla_{\xb} V$ appears multiplicatively in $\nabla_{\psi} V$. As stated in~\Cref{lem:dv_dx_at_equilibrium}, $\nabla_{\xb} V(\xb)=0$ at $\bar{\xb}=\mathbf{0}$ that makes $\nabla_{\psi} V$ vanish as well at the equilibrium point. Now suppose $V\in C^r (r\geq 1)$ with respect to both its arguments $\xb$ and $\ub$. Moreover, $\pi(\cdot, \psi)$ is assumed to be smooth with respect to its parameters $\psi$. Therefore, the map $\psi\to\nabla_\psi V$ is continuous. Furthermore, $\xb_k\to \mathbf{0}$ as $k\to\infty$ because $\xb_0\in \Rcal_\pi^\mathbf{0}$. The continouity of the map $\psi\to\nabla_\psi V$ together with the above result on vanishing $\nabla_{\psi} V$ at the equilibrium point completes the proof.
  \end{proof}
\end{theorem}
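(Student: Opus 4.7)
My plan is a two-factor chain-rule argument. Apply the chain rule to obtain
\begin{equation*}
\frac{\partial V(\xb)}{\partial \psi}\bigg|_{\xb=\xb_T} = \nabla_\xb V(\xb_T)^\top \frac{\partial \xb_T}{\partial \psi},
\end{equation*}
so that $\lVert \partial V(\xb_T)/\partial \psi \rVert \leq \lVert \nabla_\xb V(\xb_T)\rVert \cdot \lVert \partial \xb_T/\partial \psi\rVert$. The idea is to show the first factor vanishes in the $T \to \infty$ limit while the second stays uniformly bounded, so the product can be driven below any $\epsilon$.

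For the first factor, I would combine three ingredients: (i) $\xb_0 \in \Rcal_\pi^{\mathbf{0}}$ forces $\xb_T \to \mathbf{0}$ as $T \to \infty$ by the definition of the region of attraction; (ii) $V \in C^r$ with $r \geq 1$ gives continuity of $\nabla_\xb V$ on a neighborhood of the origin; and (iii) Lemma~\ref{lem:dv_dx_at_equilibrium} supplies $\nabla_\xb V(\mathbf{0}) = \mathbf{0}$. Together these yield $\nabla_\xb V(\xb_T) \to \mathbf{0}$, so for any $\epsilon' > 0$ a sufficiently large choice of $T$ makes $\lVert \nabla_\xb V(\xb_T)\rVert < \epsilon'$.

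For the second factor, I would unroll the sensitivity recursion
\begin{equation*}
\frac{\partial \xb_{k+1}}{\partial \psi} = \left(\frac{\partial f}{\partial \xb} + \frac{\partial f}{\partial \ub}\frac{\partial \pi}{\partial \xb}\right)\frac{\partial \xb_k}{\partial \psi} + \frac{\partial f}{\partial \ub}\frac{\partial \pi}{\partial \psi}
\end{equation*}
along the closed-loop trajectory and exploit hyperbolic asymptotic stability of $\mathbf{0}$: once $\xb_k$ enters a neighborhood where the closed-loop Jacobian $\partial f_\pi/\partial \xb$ has spectral radius strictly below one, the tail product of transition Jacobians is bounded by a geometric series in that radius, and smoothness of $f$ and $\pi$ controls the finite transient. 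This produces a constant $M$, independent of $T$, with $\lVert \partial \xb_T/\partial \psi\rVert \leq M$. Choosing $\epsilon' = \epsilon/M$ in the previous step completes the proof.

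The main obstacle is precisely this uniform bound on the sensitivity Jacobian: a priori $\partial \xb_T/\partial \psi$ could grow with $T$, and the short proof in the paper glosses over this by appealing to continuity of the composite map $\psi \mapsto \nabla_\psi V$. Making the argument rigorous requires either a discrete-time Gr\"onwall-type estimate over the transient window combined with the geometric contraction on the tail, or an appeal to the persistence machinery underlying~\Cref{thm:persistance_of_roa_with_policy_variations} to conclude joint continuity of $(\psi, T) \mapsto \xb_T$ with a Lipschitz constant in $\psi$ that is uniform in $T$.
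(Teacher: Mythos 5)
Your decomposition and the treatment of the first factor are exactly the paper's route: chain rule splits $\partial V(\xb_T)/\partial\psi$ into $\nabla_\xb V(\xb_T)^\top\,\partial\xb_T/\partial\psi$, and $\nabla_\xb V(\xb_T)\to\mathbf{0}$ follows from $\xb_T\to\mathbf{0}$ (definition of the RoA), continuity of $\nabla_\xb V$ from $V\in C^r$, $r\ge 1$, and Lemma~\ref{lem:dv_dx_at_equilibrium}. Where you genuinely diverge is the second factor. The paper stops at an appeal to ``continuity of the map $\psi\mapsto\nabla_\psi V$,'' which does not address whether $\partial\xb_T/\partial\psi$ stays bounded as $T\to\infty$; you correctly identify this as the actual content of the claim and supply the missing estimate by unrolling the sensitivity recursion and splitting the sum into a finite transient plus a tail controlled by the geometric decay of products of closed-loop Jacobians near a contracting equilibrium. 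That argument is sound, and it is what turns the statement into a theorem rather than a heuristic. The one caveat worth flagging: your tail bound needs the closed-loop linearization at $\bar\xb=\mathbf{0}$ to have spectral radius strictly below one (hyperbolic, hence exponential, stability), which is not among the theorem's stated hypotheses --- only asymptotic stability is assumed --- though it is a standing assumption elsewhere in the paper (cf.\ \Cref{thm:persistance_of_roa_with_policy_variations}). Under mere asymptotic stability the sensitivity Jacobian can grow with $T$ and the product need not vanish, so you should state hyperbolicity explicitly as an added hypothesis; with it, your version is strictly more complete than the paper's.
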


The effect of the other terms of~\Cref{eq:finaltime_loss_derivative} on the learning signal is discussed below.

\subsection{The Component $\frac{\partial \xb_k}{\partial \xb_p}$}
\label{sec:dxk_dxp}

This term determines how state information propagates forward through the trajectory. More precisely, it shows how perturbing a state at time $p$ affects the downstream state $\xb_k$ after $k-p$ time steps when the parameters of the system are kept fixed. In the following, we take a closer look at the constituent terms of~\labelcref{eq:dxk_dxp}. We first define the function
\begin{equation}
  \label{eq:single_Jacobian_expansion}
  J_i(\xb_i, \ub_i) = [\frac {\partial f} {\partial \xb}|_{\xb = \xb_{i}},  \frac {\partial f} {\partial \ub}|_{\ub = \ub_{i}}]\tran
\end{equation}
that is different form~\labelcref{eq:dxk_dxp} in the sense that $\ub_i$ is not necessarily a function of $\xb_i$. Due to the fact that $\partial \xb_k / \partial \xb_p$ equals the product of $J_i = J_i(\xb_i, \ub_i)$ along the trajectoy $\{(\xb_i, \ub_i)\}_{i=1}^{i=k-1}$, a measure of the size of $J_i$ would be informative about the influence of $\xb_p$ on $\xb_k$. First, we consider the vanishing gradient issue when the influence becomes too small. Notice that the size of $J_i$ as defined in~\labelcref{eq:single_Jacobian_expansion} is determined by two terms as (we drop the index $i$ for convenience)
\begin{equation}
  \label{eq:single_Jacobian_expansion_norm}
  \lVert J(\xb, \ub) \rVert \leq \lVert \frac {\partial f} {\partial \xb} \rVert + \Vert \frac {\partial f} {\partial \ub} \rVert.
\end{equation}

Let $f$ be the dynamics of the closed-loop system, i.e., $f=f_\pi$. If $f$ is Lipschitz continuous (as it is assumed in~\Cref{sec:preliminaries} to ensure the existence and uniqueness of the solution), both terms on the r.h.s. of the inequality~\labelcref{eq:single_Jacobian_expansion_norm} will be bounded. However, the condition of~\Cref{eq:single_Jacobian_expansion} is a more general case when $\ub$ is not necessarily a function of the states. In this case, $f$ must be Lipschitz in both $\xb$ and $\ub$, that is, there exists positive constants $L_{f_\xb}$ and $L_{f_\ub}$ such that:
\begin{align}
  \lVert \partial f(\xb, \ub) / \partial \xb\rVert \leq L_{f_\xb}, \; \forall \ub\in \Ucal\;\text{and}\; \forall \xb\in \Xcal\\
  \lVert \partial f(\xb, \ub) / \partial \ub\rVert \leq L_{f_\ub}, \; \forall \ub\in \Ucal\;\text{and}\; \forall \xb\in \Xcal.
\end{align}
Intuitively, this means that the open-loop dynamics $f(\xb, \ub)$ rolls out smoothly and does not respond too harshly to the changes in the control input. If a component of the system breaks down under some control input, $\ub\in\Ucal$, the above conditions do not hold. In addition, it might be the case that the dynamics of the system show high-frequency vibrations under some specific control inputs (e.g., when the controller excites the natural frequency of the system). These conditions occur rarely in physical systems when the controller remain within a reasonable working regime but may happen frequently under \emph{adversarial} regimes when the system is intentionally attacked by un unauthorized user. Such regimes are beyond the scope of this paper. Hence, we can reasonably assume that $J_i$ is upper bounded by $2\times\max(L_{f_{\xb_i}}, L_{f_{\ub_i}})$.

A more general control signal consists of two parts. The first part is a function of states and the second part is open-loop. Therefore, we have:
\begin{equation*}
  \ub_i = \pi(\xb_i;\psi) + \tilde{\ub}_i\;\; \text{with}\;\; \frac{\partial \tilde{\ub}_i}{\partial \xb_i}=0.
\end{equation*}
We can write the overall dynamics as $\xb_{i+1} = f(\xb_i, \ub_i)=f(\xb_i, (\pi(\xb_i), \tilde{\ub}_i))=f(\xb_i, \tilde{\ub}_i)$ which transforms to the case of~\labelcref{eq:single_Jacobian_expansion} with the difference that the policy function $\pi(\xb)$ is now absorbed in the first component of~\labelcref{eq:single_Jacobian_expansion}. Hence, $J_i(\xb_i, \tilde{\ub}_i)$ decomposes as
\begin{equation*}
  J_i(\xb_i, \tilde{\ub}_i) = [\frac {\partial f} {\partial \xb}|_{\xb = \xb_{i}} + \frac {\partial f} {\partial \pi(\xb)}|_{\pi(\xb) = \pi(\xb_i)},  \frac {\partial f} {\partial \tilde{\ub}}|_{\tilde{\ub} = \tilde{\ub}_{i}}]\tran\label{eq:single_Jacobian_expansion-x}
\end{equation*}
Observe that~\Cref{eq:dxk_dxp} depends only on the first block of $J_i(\xb_i, \tilde{\ub}_i)$ and $\partial f/\partial \tilde{\ub}$ does not affect the gradient even though it affects the trajectory of the system. As a result, the influence of the open-loop control signal $\tilde{\ub}_t$ on $\partial \Lcal / \partial \psi$ is via the final states of the trajectory $\xb_T$ as well as $\partial \Lcal / \partial \xb_T$ in~\labelcref{eq:finaltime_loss_derivative}.

\subsection{The Component $\frac { \partial ^ { + } \mathbf { x } _ { k } } { \partial \psi }$}
\label{sec:dxkplus_dpsi}

This component captures the effect of the policy parameters on the next state of the system when the current state is kept fixed. Using chain rule we have

\begin{equation}
  \label{eq:dxkplus_dpsi}
  \frac { \partial ^ { + } \mathbf { x } _ { k } } { \partial \psi } = \frac{\partial f(\xb_k, \pi)}{\partial \pi}|_{\pi=\pi(\xb_k;\psi)}\times \frac{\partial \pi(\xb_k;\psi)}{\partial \psi}.
\end{equation}

The first term in the r.h.s. depends on how sensitive is the dynamics function $f$ with respect to its control argument. Lower bounding this sensitivity can ensure that the effect of the controller remains visible throughout the trajectory. More rigorously, we assume there exists sufficiently large $\kappa>0$  such that
\begin{equation*}
  \lVert \frac{\partial f(\xb, \pi)}{\partial \pi}|_{\pi=\pi(\xb;\psi)} \rVert>\kappa, \quad \text{ for every $x\in\Dcal$} 
\end{equation*}
where $\Dcal$ is the domain of interest in which the system operates.

The second term on the r.h.s of~\Cref{eq:dxkplus_dpsi} is independent from the rest of the system and shows the sensitivity of the control signal with respect to the parameters of the controller. For example, if the controller is implemented by a neural network with tanh activation functions in the hidden layers, this term can vanish if the weights diverge and push the activation values towards the saturation regimes of tanh.

\section{Training Details}
\label{sec:training_details}
\subsection{Lyapunov Function}
\label{sec:lyapunov_architecture}
A parametric candidate Lyapunov function $V(\cdot;\theta)$ must satisfies the conditions of~\labelcref{eq:lyapunov_conditions}. It must be positive definite on a domain $\Dcal$ and its value must decrease along the trajectories of the system. As the second condition depends on the system, it is embedded in the optimization loss function~\labelcref{eq:roa_update_objective}. The first condition (positive definiteness) though does not depend on the system. It must be enforced for every $\xb\in\Dcal$. Rather than including it in the loss function, we restrict the hypothesis set $V(\cdot;\theta)\in\Hcal$ to the class of positive definite functions. In kernel methods, this property can be achieved by a proper choice of kernels. With neural networks, $V(\cdot;\theta)$ can be represented by $V(\xb;\theta) = v(\xb;\theta)\tran v(\xb;\theta)$ as a Lyapunov candidate function where $v(\xb;\theta)$ is a multilayer perceptron. This guarantees the non-negativeness of $V(\cdot;\theta)$. To ensure $V(\xb;\theta)$ does not vanish at any point other than the origin, first suppose $\zb_{\ell}$ and $\zb_{\ell+1}$ are the input and output of the $\ell^{\rm th}$ layer respectively, i.e., $\zb_{\ell+1} = \zeta_\ell(W \zb_{\ell})$ where $W$ is the weight matrix and $\zeta_\ell(\cdot)$ is the activation function. To guarantee the strict positiveness of $v(\xb;\theta)$ for $\xb\neq \mathbf{0}$, both activation function $\zeta_\ell$ and weight matrix $W_\ell\in\RR^{d_\ell \times d_{\ell+1}}$ must have trivial nullspaces. Activation functions such as tanh and ReLU meet this condition. The weight matrix can be constructed as the following:

\begin{equation}
  \mathbf{W}_{\ell}=\left[\begin{array}{c}
  \mathbf{G}_{\ell 1}^{\top} \mathbf{G}_{\ell 1}+\varepsilon \mathbf{I}_{d_{\ell-1}} \\
  \mathbf{G}_{\ell 2}
  \end{array}\right]
\end{equation}
where $\mathbf{G}_{\ell 1} \in \mathbb{R}^{q_{\ell} \times d_{\ell-1}}$ for some $q_{\ell} \in \mathbb{N}_{\geq 1}, \mathbf{G}_{\ell 2} \in \mathbb{R}^{\left(d_{\ell}-d_{\ell-1}\right) \times d_{\ell-1}}, \mathbf{I}_{d_{\ell-1}} \in \mathbb{R}^{d_{\ell-1} \times d_{\ell-1}}$ is the identity matrix, and $\epsilon\in\RR_+$ is a positive constant to ensure the upper block has full rank. See Remark 1 in the appendix of~\citep{richards2018lyapunov} for further details.

In our experiments, $v(\xb;\theta)$ is represented by a $3$-layer multilayer perceptron each layer of dimension $64$ followed by tanh activation functions and a linear last layer.

\subsection{Discretized Time and State Space}
\label{sec:discretized_statespace}
We discretize the dynamics of the inverted pendulum with time resolution of $\Delta T = 0.01$. The state space is also discretized in the rectangle $[\theta_{\min}, \theta_{\max}] = [-\pi/2, \pi/2], [\omega_{\min}, \omega_{\max}] = [-2\pi, 2\pi]$. Each dimension is divided equally into $100$ sections to form a $2-$dimensional grid. Therefore, all the computations of~\Cref{sec:experiments} takes place with a finite set of states. For example, the true RoA denoted by the green plot in~\Cref{fig:cartoon_and_graphical_RoA}(Right) is computed by integrating forward all states of the state space grid. If the resolution of the grid is too coarse, the bound on the negative definiteness of $\Delta V(\xb)$ must change from $0$ to a more conservative negative value in order to make sure the critical Lyapunov decrease condition will not be violated. The new bound depends on the Lipschitz constants of $V$ and $f$ (see~\cite{richards2018lyapunov} for the detailed derivation of the bound). 

\subsection{Hyper-parameters}
\label{sec:training_hyperparameters}
The pre-training phase is performed with the learning rate $0.001$ and $10k$ training steps. Each run of the RoA estimation algorithm is performed with the learning rate $0.01$ and $10k$ training steps. Each run of the policy update sub-phase is performed with the learning rate $0.01$ with $100$ steps. The number of learning steps in the policy update sub-phase is a proxy to the distance between the current policy and the updated policy. Hence, by keeping this number fairly small, we make sure the condition of~\Cref{assumption:continuously_changing_roa_with_policies} is likely to be fulfilled. 

In~\Cref{eq:roa_update_objective}, $\lambda_{\rm RoA}$ is set to $1000$ to enforce the Lyapunov decrease condition. In the same equation, $\lambda_{\rm monot}$ is set to 0.01 to encourage to monotonicity of the RoA estimation algorithm. In~\Cref{eq:policy_update_objective}, $\lambda_u$ is set to $10$ to put more emphasis on stabilizing the unstable states compared with keeping stable the already stable states.

The sampling mixture parameters $\beta_r$ and $\beta_p$ are both set to $0.6$ in~\Cref{alg:roa_estimation,alg:policy_update} respectively.

The length of the integrated trajectories ($L_r, L_p$) is set to $10$ for both RoA estimation (see \Cref{alg:roa_estimation}) and policy update (see~\Cref{alg:policy_update}) sub-phases. 

The multiplication constants $\gamma_r$ and $\gamma_p$ for both RoA estimation and policy update sub-phases is set to $4$.

The number of RoA estimation sub-phases $m$ in~\Cref{alg:roa_estimation} is set to $20$. The total number of policy update sub-phases is also set to $20$. One can alternatively use a context-aware stopping criterion. For instance, updating the policy can be stopped when no significant change in the RoA or the policy parameters is observed.

The number of sampled initial states $N$ is initialized to $10$ in~\Cref{alg:roa_estimation,alg:policy_update} and increases by $10$ after each update of the policy. The heuristic reason is that after each policy update, the RoA enlarges and it takes more samples to obtain a good representative of the gap surrounding the RoA.

\clearpage
\section{More Experimental Results}
\label{sec:more_experimental_results}

\begin{figure}[ht!]
  \centering
  \begin{subfigure}[t]{0.45\textwidth}
      \centering
      \includegraphics[width=1\textwidth]{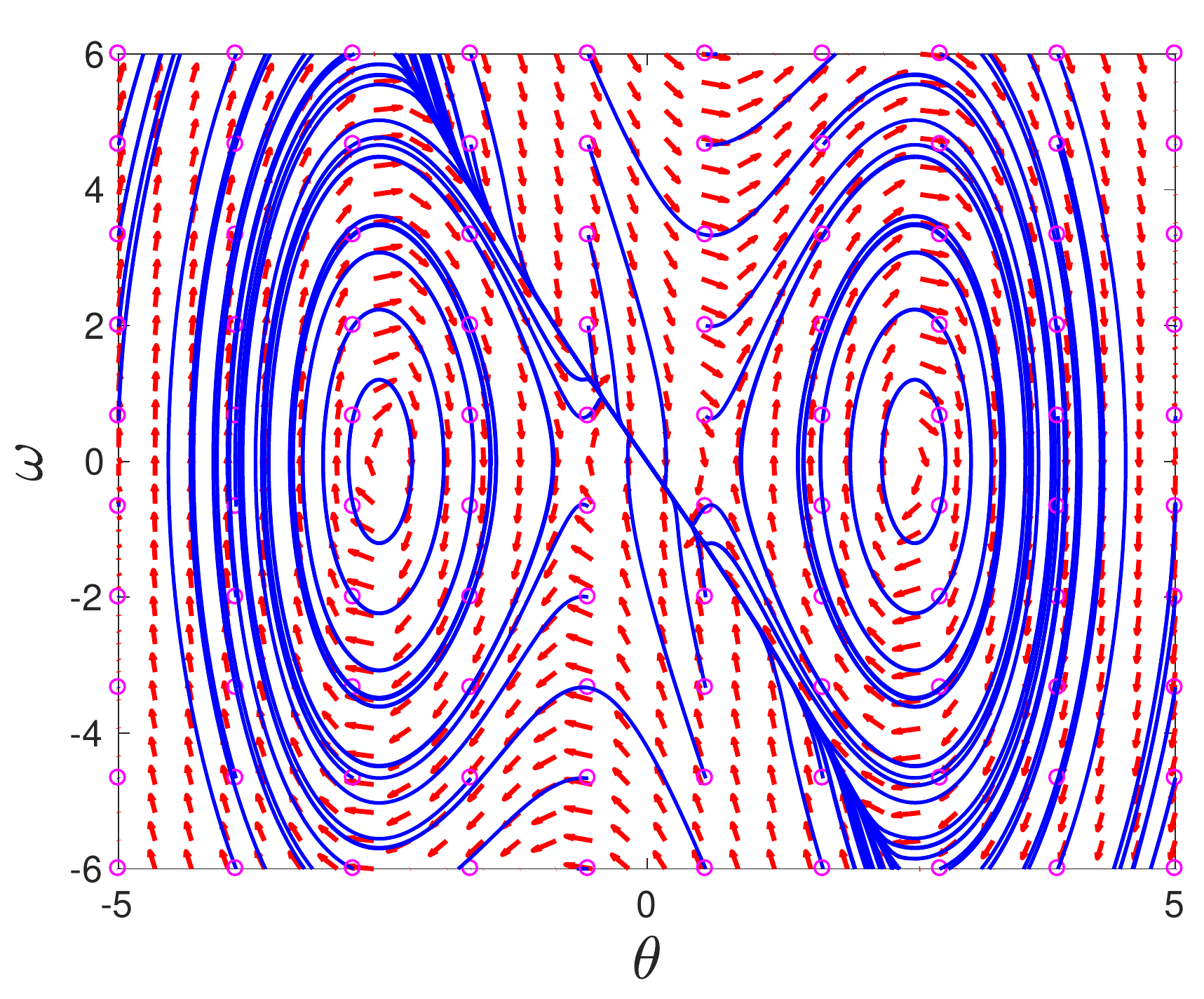}
      \caption{Phase portrait}\label{fig:vectorfield_pendulum}        
  \end{subfigure}
  \begin{subfigure}[t]{0.35\textwidth}
      \centering
      \includegraphics[width=1\textwidth]{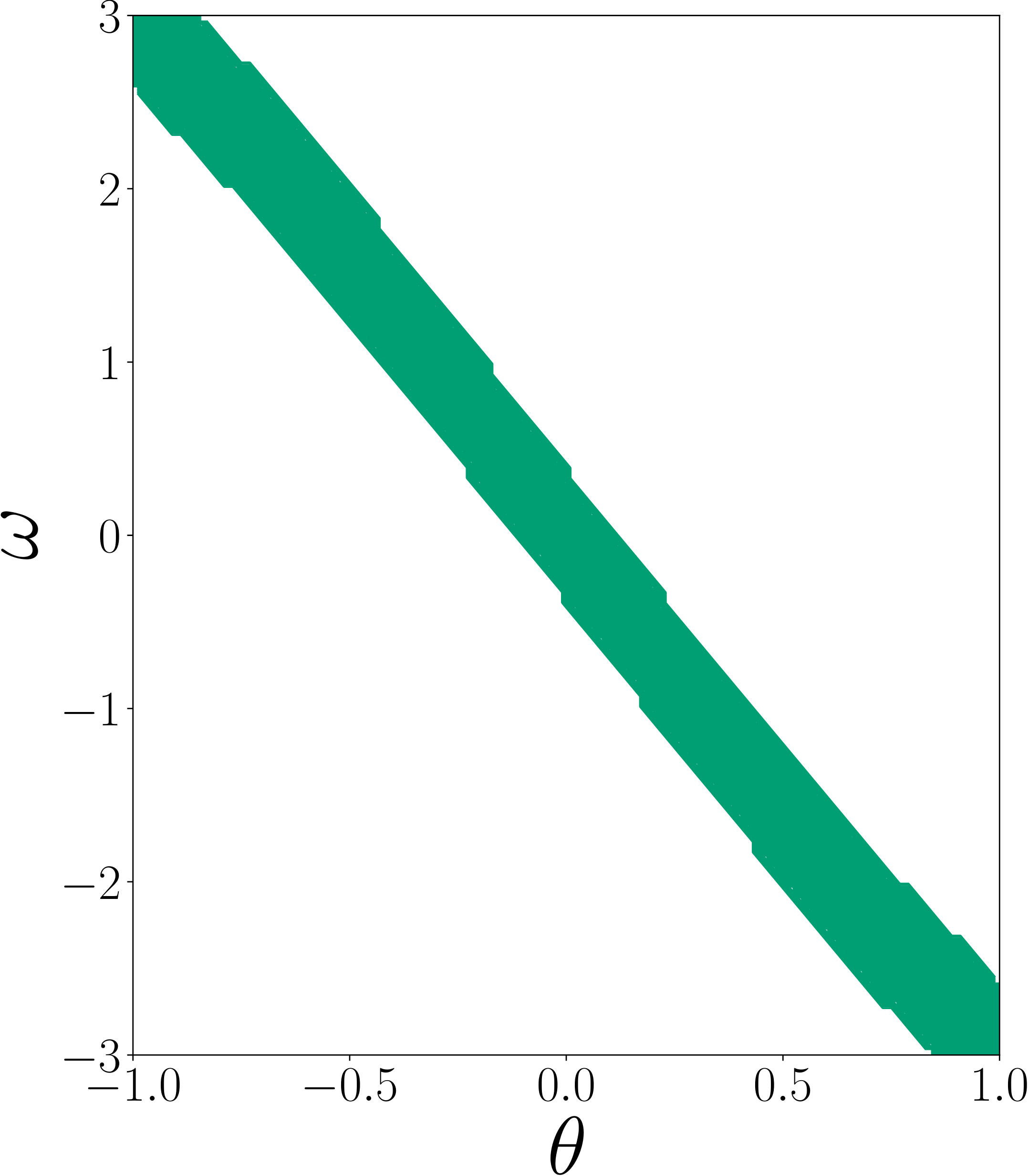}
      \caption{RoA}\label{fig:roa_pendulum_stage_0}        
  \end{subfigure}
  \caption{\footnotesize Dynamics of the inverted pendulum and its initial RoA for an LQR controller}\label{fig:experiment_system_overall}
\end{figure}

\begin{figure}[ht!]
  \label{fig:pretraining_lyapunov_function}
  \centering
  \begin{subfigure}[t]{0.31\textwidth}
      \centering
      \includegraphics[width=1\textwidth]{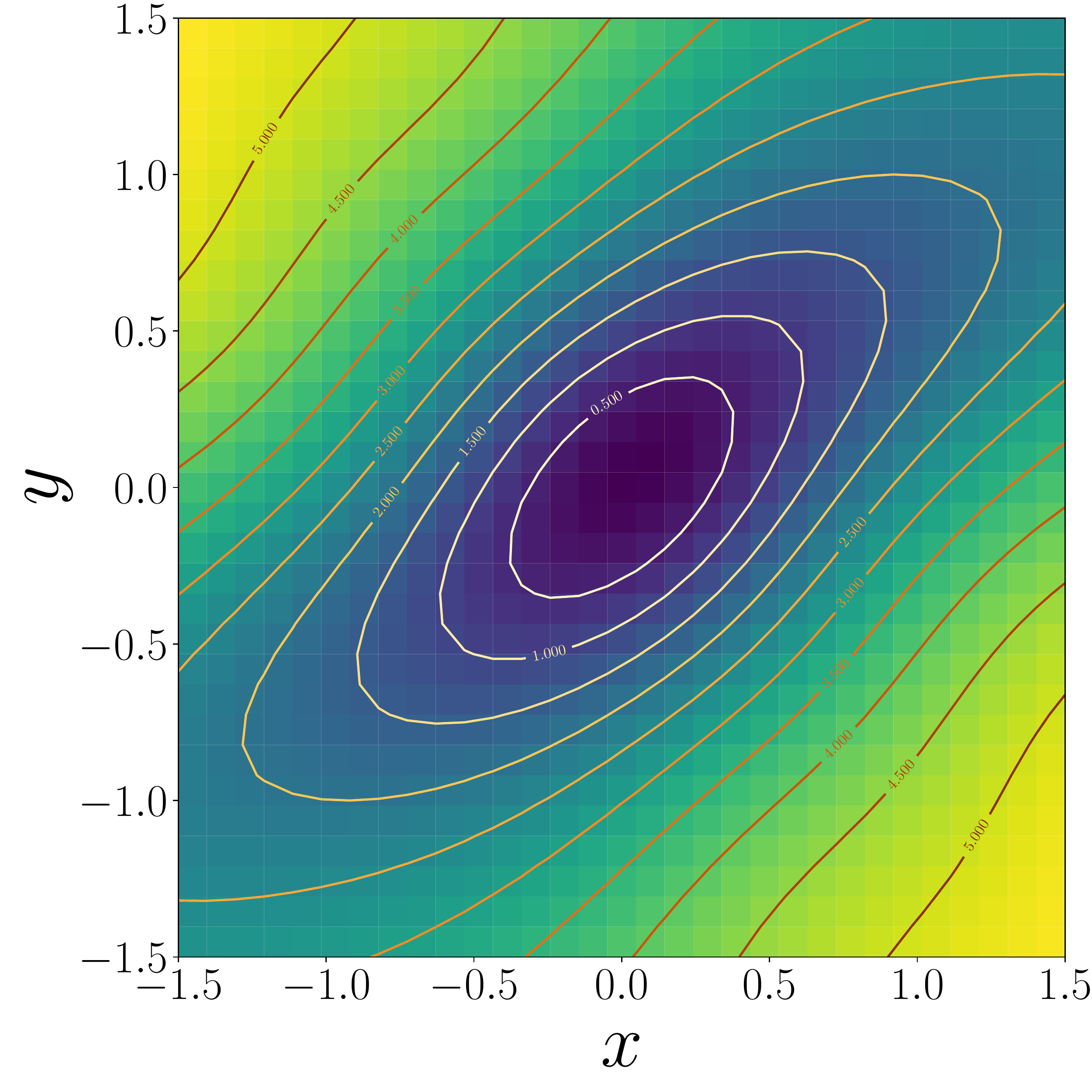}
      \caption{\footnotesize Untrained}\label{fig:levelsets_nn_pre_init}        
  \end{subfigure}
  \begin{subfigure}[t]{0.31\textwidth}
      \centering
      \includegraphics[width=1\textwidth]{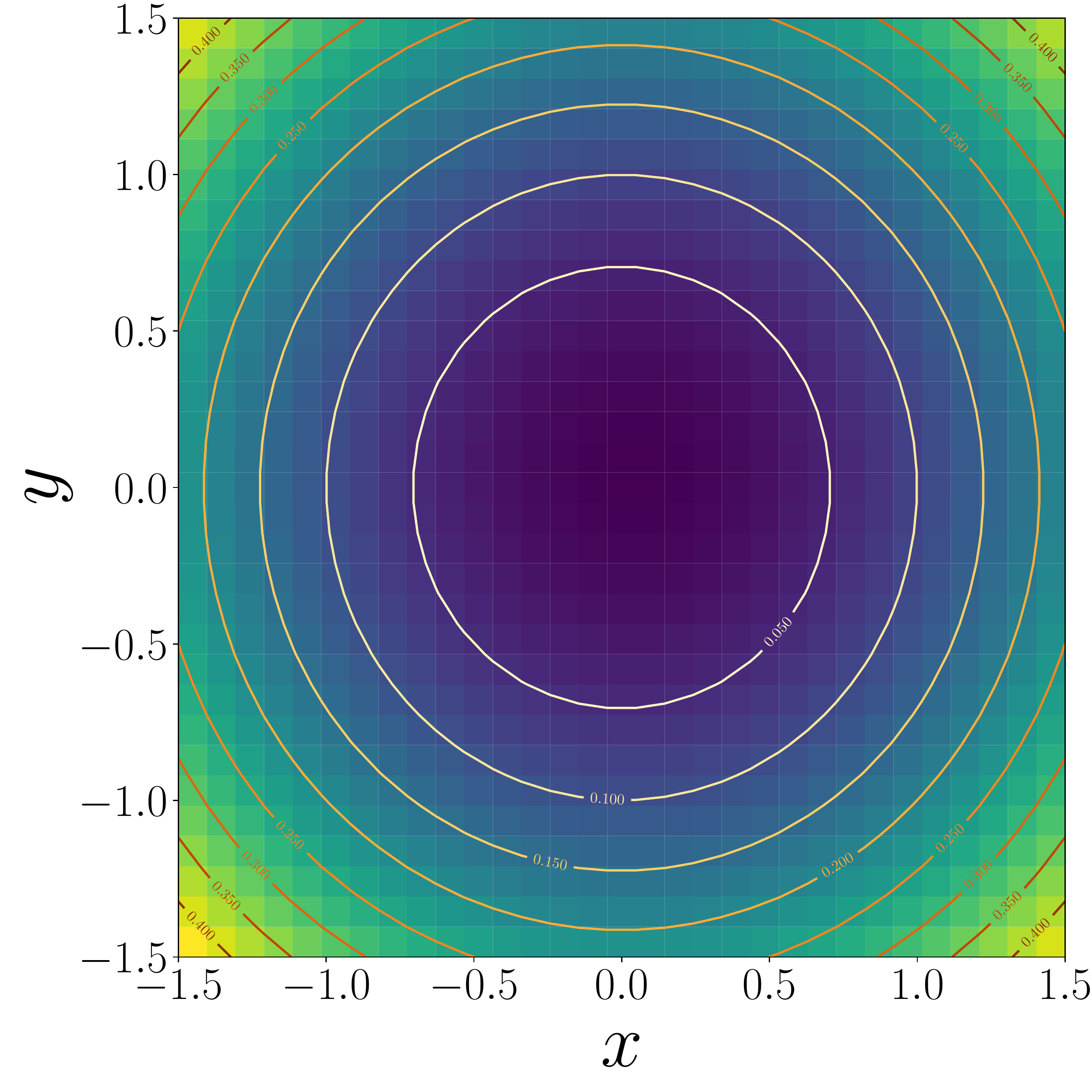}
      \caption{\footnotesize Quadratic}\label{fig:evelsets_quadratic_pre}        
  \end{subfigure}
  \begin{subfigure}[t]{0.31\textwidth}
      \centering
      \includegraphics[width=1\textwidth]{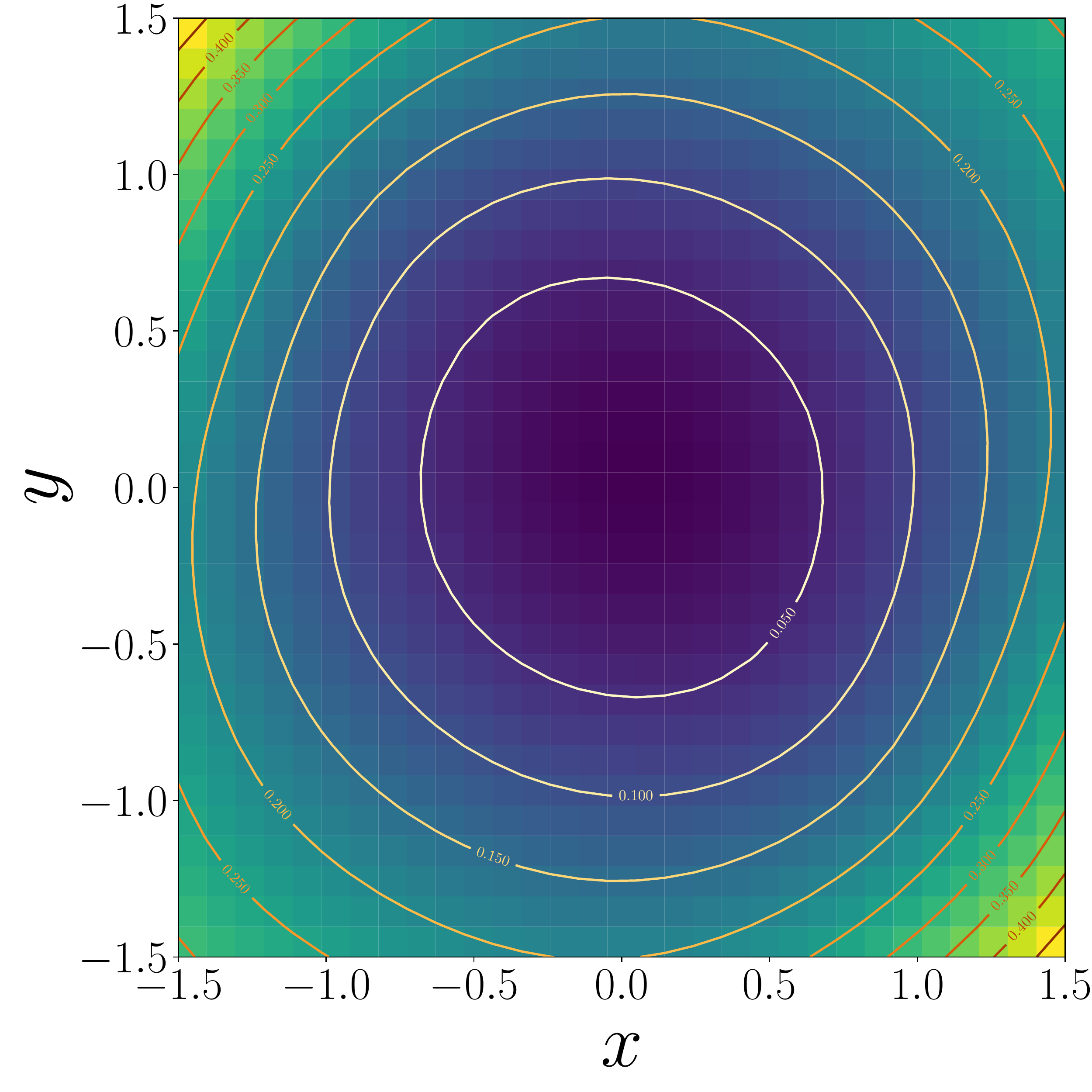}
      \caption{\footnotesize Trained}\label{fig:levelsets_nn_pretrained}    
  \end{subfigure}
  \caption{\footnotesize Pre-training the neural network with a quadratic Lyapunov function. The background heatmap represents the value of the underlying function. Lighter regions correspond to larger values.(a) The level sets of the untrained initialized neural network. (b) The target quadratic function (2) The level sets of the neural network pre-trained with the quadratic function of~\Cref{fig:evelsets_quadratic_pre}.}
\end{figure}

\newcommand{\checkfor}[1]{%
  \ifcsname#1\endcsname%
    \renewcommand{\sfsize}{0.14\textwidth}
  \else%
    \newcommand{\sfsize}{0.14\textwidth}
  \fi%
}

\checkfor{sfsize}

\begin{figure}[h!]
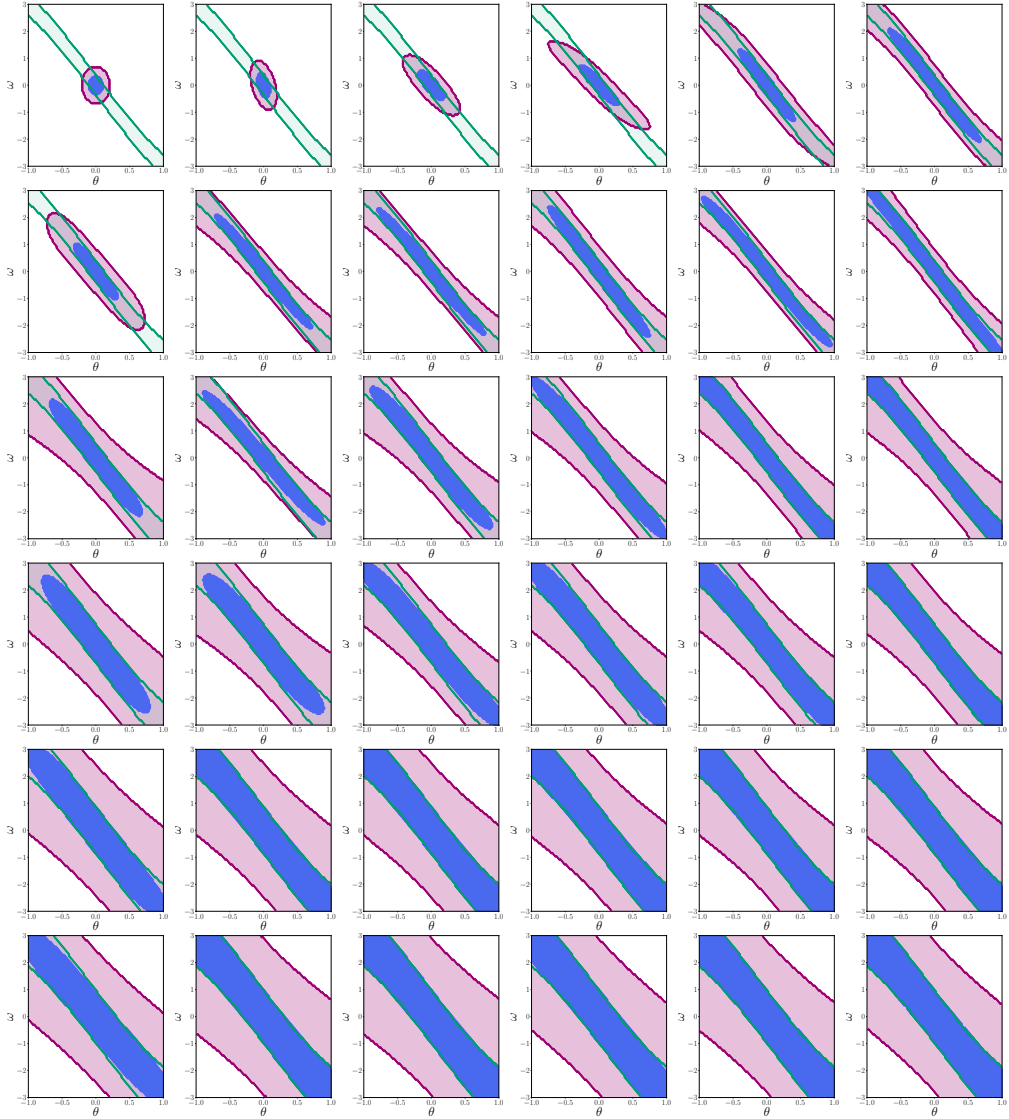

  \centering
  \input{./figs/increasing_roas/tex_files/increasing_nested_roa_figure_stage_1.tex}\\
  \input{./figs/increasing_roas/tex_files/increasing_nested_roa_figure_stage_2.tex}\\
  \input{./figs/increasing_roas/tex_files/increasing_nested_roa_figure_stage_3.tex}\\
  \input{./figs/increasing_roas/tex_files/increasing_nested_roa_figure_stage_4.tex}\\
  \input{./figs/increasing_roas/tex_files/increasing_nested_roa_figure_stage_5.tex}\\
  \input{./figs/increasing_roas/tex_files/increasing_nested_roa_figure_stage_6.tex}\\
  \caption{Visualizing the true ROA which is enlarged by the improved policy and is chased by a learned Lyapunov function. Each row corresponds to a policy update sub-phase and each column corresponds to a RoA estimation sub-phase. At each row, from left to right, the policy is fixed that results in a fixed true RoA (green plot). The columns from left to right are the internal iterations of the RoA estimation sub-phase (see~\Cref{alg:roa_estimation}). The blue color is the estimated RoA $S_{c_n}(V_{\pi_n})$ and the pink color shows the gap $\Gcal=S_{\gamma c_n}(V_{\pi_n})\backslash S_{c_n}(V_{\pi_n})$ that is used in Algorithms~\labelcref{alg:roa_estimation,alg:policy_update} for sampling the initial states. After the RoA estimation sub-phase is done (the rightmost figure of each row), the policy update sub-phase is performed. The leftmost figure in the next row shows that the true RoA enlarges as a result of the policy update. The RoA estimation sub-phase continues from its latest iteration which is a decent initial approximate for the enlarged RoA. As a result of the alternate application of RoA estimation and policy update sub-phases, the bottom rightmost figure shows a significantly larger RoA compared with the top leftmost figure (see the green boundary of the true RoA around the blue region that is the estimated RoA by the learned Lyapunov function). Note that the Lyapunov function is also learned such that its sublevel set (blue region) almost perfectly matches the true RoA (green boundary) in the bottom rightmost figure.}\label{fig:pendulum_nested_roas_full}
\end{figure}

\clearpage
\bibliography{refs_arxiv} 

\end{document}